\documentclass[11pt]{article}

\usepackage[margin=1in]{geometry}
\usepackage{amsthm,amsfonts,amssymb,amsmath}
\usepackage{bm}
\usepackage{hyperref}
\usepackage{graphicx}
\usepackage{url}\urlstyle{same}
\usepackage{hyperref} 
\hypersetup{
    colorlinks=true,
    linkcolor=blue,
    citecolor=blue,
    filecolor=blue,
    urlcolor=blue
}
\usepackage[all]{hypcap}
\usepackage[comma,square,numbers,sort]{natbib}
\mathchardef\ordinarycolon\mathcode`\:
\mathcode`\:=\string"8000
\begingroup \catcode`\:=\active
  \gdef:{\mathrel{\mathop\ordinarycolon}}
\endgroup
\usepackage{color}

\newcommand{\seg}{z}
\newcommand{\nn}{\nonumber \\}
\newcommand{\bra}[1]{\langle{#1}|}
\newcommand{\ket}[1]{|{#1}\rangle}
\newcommand{\norm}[1]{\|{#1}\|}
\newcommand{\CC}{\mathbb{C}}
\newcommand{\openone}{\mathbb{I}}
\newcommand{\id}{\openone}
\DeclareMathOperator{\spn}{span}
\newcommand{\iters}{\ell}
\newcommand{\vecU}{\vec U}
\DeclareMathOperator{\sel}{select}
\newcommand{\zer}{\varsigma}

\newtheorem{theorem}{Theorem}
\newtheorem{lemma}[theorem]{Lemma}

\newcommand{\eq}[1]{(\ref{eq:#1})}
\renewcommand{\sec}[1]{Section~\ref{sec:#1}}
\newcommand{\thm}[1]{Theorem~\ref{thm:#1}}
\newcommand{\lem}[1]{Lemma~\ref{lem:#1}}


\begin{document}

\title{Hamiltonian simulation with nearly \\ optimal dependence on all parameters}
\author{
Dominic W.\ Berry
\footnote{Department of Physics and Astronomy, Macquarie University, Sydney, Australia. 
\texttt{dominic.berry@mq.edu.au}
} \and 
Andrew M.\ Childs
\footnote{Department of Combinatorics \& Optimization and Institute for Quantum Computing, University of Waterloo, Waterloo, Ontario, Canada.}
\footnote{Department of Computer Science, Institute for Advanced Computer Studies, and Joint Center for Quantum Information and Computer Science, University of Maryland, College Park, Maryland, USA.
\texttt{amchilds@umd.edu}
} \and
Robin Kothari
\footnote{David R.~Cheriton School of Computer Science and Institute for Quantum Computing, University of Waterloo, Waterloo, Ontario, Canada.}
\footnote{Center for Theoretical Physics, Massachusetts Institute of Technology, Cambridge, Massachusetts, USA.
\texttt{rkothari@mit.edu}
}}
\date{}
\maketitle

\begin{abstract}
We present an algorithm for sparse Hamiltonian simulation whose complexity is optimal (up to log factors) as a function of all parameters of interest. Previous algorithms had optimal or near-optimal scaling  in some parameters at the cost of poor scaling in others. Hamiltonian simulation via a quantum walk has optimal dependence on the sparsity at the expense of poor scaling in the allowed error. In contrast, an approach based on fractional-query simulation provides optimal scaling in the error at the expense of poor scaling in the sparsity. Here we combine the two approaches, achieving the best features of both. By implementing a linear combination of quantum walk steps with coefficients given by Bessel functions, our algorithm's complexity (as measured by the number of queries and 2-qubit gates) is logarithmic in the inverse error, and
nearly linear in the product $\tau$ of the evolution time, the sparsity, and the magnitude of the largest entry of the Hamiltonian.  Our dependence on the error is optimal, and we prove a new lower bound showing that no algorithm can have sublinear dependence on $\tau$.
\end{abstract}

\section{Introduction}

The problem of simulating the dynamics of quantum systems was the original motivation for quantum computers \cite{Fey82} and remains one of their major potential applications. Although classical algorithms for this problem are inefficient, a significant fraction of the world's computing power today is spent in solving instances of this problem that arise in, e.g., quantum chemistry and materials science~\cite{OLCF13,NERSC13}. Furthermore, efficient classical algorithms for this problem are unlikely to exist: since the simulation problem is \textsf{BQP}-complete~\cite{Fey85}, an efficient classical algorithm for quantum simulation would imply an efficient classical algorithm for any problem with an efficient quantum algorithm (e.g., integer factorization \cite{Sho97}).

The first explicit quantum simulation algorithm, due to Lloyd \cite{Llo96}, gave a method for simulating Hamiltonians that are sums of local interaction terms. Aharonov and Ta-Shma gave an efficient simulation algorithm for the more general class of sparse Hamiltonians \cite{AT03}, and much subsequent work has given improved simulations \cite{Chi04,BACS07,WBHS11,Chi10,PQSV11,BC12,CW12,BCCKS14,BCCKS15}.  Sparse Hamiltonians include most physically realistic Hamiltonians as a special case (making these algorithms potentially useful for simulating real-world systems). In addition, sparse Hamiltonian simulation can be used to design other quantum algorithms \cite{HHL09,CCDFGS03,CCJY09}.
For example, it was used to convert the algorithm for evaluating a balanced binary NAND tree with $n$ leaves \cite{FGG08} to the discrete-query model \cite{CCJY09}.

In the Hamiltonian simulation problem, we are given an $n$-qubit Hamiltonian $H$ (a Hermitian matrix of size $2^n \times 2^n$), an evolution time $t$, and a precision $\epsilon>0$, and are asked to implement the unitary operation $e^{-iHt}$
up to error at most $\epsilon$ (as quantified by the diamond norm distance).
That is, the task is to implement a unitary operation, rather than simply to generate \cite{AMRR11} or convert \cite{LMRSS11} a quantum state.
We say that $H$ is $d$-sparse if it has at most $d$ nonzero entries in any row.  In the sparse Hamiltonian simulation problem, $H$ is specified by a black box that takes input $(j,\ell) \in [2^n] \times [d]$ (where $[d] := \{1,\ldots,d\}$) and outputs the location and value of the $\ell$th nonzero entry in the $j$th row of $H$.
Specifically, as in \cite{BC12}, we assume access to an oracle $O_H$ acting as
\begin{equation}
O_H\ket{j,k,z} = \ket{j,k,z \oplus{H_{jk}}}
\label{eq:oracleh}
\end{equation}
for $j,k \in [2^n]$ and bit strings $z$ representing entries of $H$, and another oracle $O_F$ acting as
\begin{equation}
O_F\ket{j,\ell} = \ket{j,f(j,\ell)},
\label{eq:oraclef}
\end{equation}
where $f(j,\ell)\colon [2^n] \times [d] \to [2^n]$ is a function giving the column index of the $\ell$th nonzero element in row $j$.  Note that the form of $O_F$ assumes that the locations of the nonzero entries of $H$ can be computed in place.  This is possible if we can efficiently compute both $(j,\ell) \mapsto f(j,\ell)$ and the reverse map $(j,f(j,\ell)) \mapsto \ell$, which holds in typical applications of sparse Hamiltonian simulation.  Alternatively, if $f$ provides the nonzero elements in order, we can compute the reverse map with only a $\log d$ overhead by binary search.

At present, the best algorithms for sparse Hamiltonian simulation, in terms of query complexity (i.e., the number of queries made to the oracles) and number of 2-qubit gates used, are one based on a Szegedy quantum walk \cite{Chi10,BC12} and another based on simulating an unconventional model of query complexity called the fractional-query model 
\cite{BCCKS14}.
An algorithm with similar complexity to \cite{BCCKS14} is based on implementing a Taylor series of the exponential \cite{BCCKS15}.
The quantum walk approach has query complexity $O(d\norm{H}_{\max}t/\sqrt\epsilon)$, which is linear in both the sparsity $d$ and the evolution time $t$.  (Here $\norm{H}_{\max}$ denotes the largest entry of $H$ in absolute value.)  However, this approach has poor dependence on the allowed error $\epsilon$. In contrast, the fractional-query approach has query complexity $O\big(d\tau \frac{\log(\tau/\epsilon)}{\log\log(\tau/\epsilon)}\big)$, where $\tau := d\norm{H}_{\max}t$.  This approach gives exponentially better dependence on the error at the expense of quadratically worse dependence on the sparsity.
Considering the fundamental importance of quantum simulation, it is desirable to have a method that achieves the best features of both approaches.
In this work, we combine the two approaches, giving the following.
\begin{theorem}
\label{thm:upper}
A d-sparse Hamiltonian $H$ acting on $n$ qubits can be simulated for time $t$ within error $\epsilon$ with 
\begin{equation}
\label{eq:upper}
  O\left( \tau \frac{\log(\tau/\epsilon)}{\log\log(\tau/\epsilon)}\right)
\end{equation}
queries and 
\begin{equation}
  O\left( \tau [n + \log^{5/2}(\tau/\epsilon)] \frac{\log(\tau/\epsilon)}{\log\log(\tau/\epsilon)}\right) \end{equation}
additional 2-qubit gates, where $\tau := d \norm{H}_{\max} t$.
\end{theorem}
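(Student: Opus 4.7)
The plan is to combine the Szegedy-type quantum walk for sparse Hamiltonians of Childs~\cite{Chi10} and Berry--Childs~\cite{BC12} with the linear-combination-of-unitaries (LCU) framework of~\cite{BCCKS14,BCCKS15}, using a Jacobi--Anger expansion as the bridge. Let $W$ be the walk operator: each invocation of $W$ costs $O(1)$ queries to $O_H$ and $O_F$ and $O(n+\mathrm{poly}\log d)$ additional 2-qubit gates, and the 2-dimensional invariant blocks of $W$ carry eigenvalues $e^{\pm i\arcsin(\lambda/(d\norm{H}_{\max}))}$ as $\lambda$ ranges over the spectrum of $H$. Setting $\tau:=d\norm{H}_{\max}t$, the Jacobi--Anger identity
\[
e^{-i\tau\sin\theta}=\sum_{k\in\mathbb{Z}}J_k(\tau)\,e^{-ik\theta}
\]
shows that, on the image of the walk's encoding isometry, $e^{-iHt}$ coincides with $\sum_{k}J_k(\tau)\,W^k$. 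The proof then amounts to realizing a truncation of this LCU efficiently.

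First I would split the full evolution into $r=\Theta(\tau)$ equal segments of parameter $\tau/r=\Theta(1)$. This split is essential rather than cosmetic: for large argument $\sum_k|J_k(\tau)|$ grows like $\sqrt{\tau}$, which would spoil the constant-round oblivious amplitude amplification step, whereas for $O(1)$ argument the $\ell_1$ norm of the Bessel sequence is $O(1)$. Within each segment, I would truncate the sum at $|k|\le K$ and exploit the super-exponential decay of $J_k(\tau/r)$ for $|k|\gg\tau/r$; standard bounds then give $K=O(\log(r/\epsilon)/\log\log(r/\epsilon))$ as sufficient to achieve operator error $\epsilon/r$ per segment.

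Next, each segment is realized by the standard prepare--select--unprepare template: on an $O(\log K)$-qubit ancilla, a state-preparation unitary $B$ produces $\propto\sum_{|k|\le K}\sqrt{|J_k(\tau/r)|}\,\ket{k}\otimes\ket{b_k}$, where $b_k$ encodes the sign of $J_k$; the select operator $\sel(W)=\sum_k\ket{k}\bra{k}\otimes W^k$ costs $O(K)$ walk steps; $B^\dagger$ uncomputes the ancilla. A constant number of rounds of oblivious amplitude amplification, as in~\cite{BCCKS15}, converts this into a unitary whose action on the relevant subspace is within $\epsilon/r$ of the segment's evolution; concatenating the $r$ segments yields total error $\epsilon$ with $O(rK)=O(\tau\log(\tau/\epsilon)/\log\log(\tau/\epsilon))$ queries, matching \eq{upper}.

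For the gate count, each walk step contributes $O(n)$ gates beyond its queries, accounting for the $n$-factor in the theorem. The residual $\log^{5/2}(\tau/\epsilon)$ factor per query comes from $B$: the amplitudes $\sqrt{|J_k(\tau/r)|}$ must be loaded coherently to precision roughly $\epsilon/(rK)$, requiring coherent evaluation of $J_k$ at an $O(1)$ argument (for instance via a truncated power series or integral representation) followed by amplitude loading through a chain of controlled rotations. I expect this state-preparation subroutine to be the main obstacle: unlike the simply-structured Taylor-series coefficients of~\cite{BCCKS15}, the Bessel coefficients have no elementary closed form, so a careful accounting of arithmetic-precision bits, and of how the $B$-implementation error is absorbed into the segment error budget without inflating $K$ or the number of amplification rounds, is the delicate step that yields the stated $\log^{5/2}$ exponent.
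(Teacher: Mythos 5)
Your query-complexity argument follows essentially the same route as the paper: the same walk operator, the same Bessel/Jacobi--Anger expansion, segments of size $\Theta(1)$ precisely so that $\sum_k |J_k|=O(1)$ permits constant-round oblivious amplitude amplification, truncation at $K=O(\log(\tau/\epsilon)/\log\log(\tau/\epsilon))$, and a prepare--select--unprepare LCU per segment, giving $O(rK)$ controlled walk steps. (One subtlety you gloss over: the walk eigenvalues are $\pm e^{\pm i\arcsin\nu}$, with an extra sign on one branch, and the identity as you wrote it only covers the $+$ branch; the reason no eigenspace discrimination is needed is that the Laurent generating function $\sum_m J_m(\seg)\mu^m=\exp[\tfrac{\seg}{2}(\mu-1/\mu)]$ evaluates to the same $e^{i\nu\seg}$ on both branches, using $J_{-m}=(-1)^mJ_m$. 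Also the truncated combination must be renormalized and one needs the robust version of amplitude amplification plus a spectral-to-diamond-norm conversion, but these are details your outline absorbs correctly.)

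The genuine gap is in the gate-complexity accounting. You attribute the $\log^{5/2}(\tau/\epsilon)$ factor to coherent evaluation and loading of the Bessel amplitudes in the preparation unitary $B$, and you assert that each walk step costs only $O(n)$ gates beyond its queries. Both points are off. The Bessel coefficients are data-independent classical numbers: they can be precomputed classically (e.g., by Miller's recurrence) to the needed precision, so $B$ is just $O(k)$ single- and two-qubit rotations with classically known angles per segment, a subdominant cost; no coherent evaluation of $J_k$ is required, and that route would not naturally yield a per-query $\log^{5/2}$ cost anyway. The factor actually arises inside each walk step: the isometry $T$ must rotate an ancilla by an angle determined by the queried entry $H_{j\ell}$, i.e., prepare $\sqrt{H^*_{j\ell}/X}\ket{0}+\sqrt{1-|H^*_{j\ell}|/X}\ket{1}$ on the fly, which requires reversible arithmetic (square roots and trigonometric functions) on $m=\log(\|H\|t/\epsilon)$-bit oracle outputs. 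With Taylor-series plus long-multiplication arithmetic this costs $F(m)=O(m^{5/2})$ gates, incurred once per walk step, hence multiplying $\tau\,\log(\tau/\epsilon)/\log\log(\tau/\epsilon)$ and producing the $n+\log^{5/2}(\tau/\epsilon)$ bracket in the theorem. Without this per-step arithmetic accounting your gate bound is not justified, even though the final formula you wrote happens to match.
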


This result provides a strict improvement over the query complexity of \cite{BCCKS14,BCCKS15}, removing a factor of $d$ in $\tau$, and thus providing near-linear instead of superquadratic dependence on $d$.

We also prove a lower bound showing that any algorithm must use $\Omega(\tau)$ queries. While a lower bound of $\Omega(t)$ was known previously \cite{BACS07}, our new lower bound shows that the complexity must be at least linear in the product of the sparsity and the evolution time.  Our proof is similar to a previous limitation on the ability of quantum computers to simulate non-sparse Hamiltonians \cite{CK10}: by replacing each edge in the graph of the Hamiltonian by a complete bipartite graph $K_{d,d}$, we effectively boost the strength of the Hamiltonian by a factor of $d$ at the cost of making the matrix less sparse by a factor of $d$.  Combining this result with the error-dependent lower bound of \cite{BCCKS14}, we find a lower bound as follows.

\begin{theorem}
\label{thm:lower}
For any $\epsilon,t>0$, integer $d\ge 2$, and fixed value of $\norm{H}_{\max}$, there exists a $d$-sparse Hamiltonian $H$ such that simulating $H$ for time $t$ with precision $\epsilon$ has query complexity
\begin{equation}
\label{eq:lower}
\Omega \left( \tau + \frac{\log(1/\epsilon)}{\log\log(1/\epsilon)}\right).
\end{equation}
\end{theorem}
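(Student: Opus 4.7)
The plan is to prove the two summands of \eqref{eq:lower} separately; since $\max(A,B) = \Omega(A+B)$ for non-negative $A,B$, establishing $\Omega(\tau)$ and $\Omega(\log(1/\epsilon)/\log\log(1/\epsilon))$ independently then suffices. The error-dependent summand can be invoked directly from \cite{BCCKS14}: their hard instance is already a $2$-sparse Hamiltonian with $\|H\|_{\max}=O(1)$ and short evolution time, so it sits inside the parameter regime of \thm{lower}.

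For the $\Omega(\tau)$ summand, the starting point is the no-fast-forwarding lower bound of \cite{BACS07}, which exhibits a fixed $2$-sparse Hamiltonian $H_0$ with $\|H_0\|_{\max}=\Theta(1)$ such that any simulation of $e^{-iH_0T}$ to some absolute constant error requires $\Omega(T)$ queries. To inject the sparsity parameter $d$, I would blow $H_0$ up to
\[
H := H_0 \otimes J_m, \qquad m := \lfloor d/2\rfloor,
\]
where $J_m$ denotes the $m\times m$ all-ones matrix; graph-theoretically, this replaces each nonzero entry of $H_0$ by a complete bipartite block $K_{m,m}$ carrying the same weight. Then $H$ is $d$-sparse and $\|H\|_{\max}=\|H_0\|_{\max}=\Theta(1)$, so any fixed target value of $\|H\|_{\max}$ can be arranged by rescaling $H_0$. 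Because $J_m$ has eigenvalue $m$ on the uniform vector $\ket{\phi}:=\tfrac{1}{\sqrt m}\sum_{a=1}^m\ket{a}$, one checks that
\[
H\bigl(\ket{\psi}\otimes\ket{\phi}\bigr) \;=\; m\,(H_0\ket{\psi})\otimes\ket{\phi},
\]
so the subspace $\{\ket{\psi}\otimes\ket{\phi}\}$ is invariant under $e^{-iHt}$, on which the restriction equals $e^{-iH_0(mt)}$. Consequently, any $\epsilon$-accurate simulator for $H$ over time $t$ yields an $\epsilon$-accurate simulator for $H_0$ over time $mt=\Theta(dt)$ on the original register.

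The last ingredient is the oracle reduction: one query to $O_H,O_F$ for $H$ can be emulated with $O(1)$ queries to the corresponding oracles for $H_0$, since $H_{(j,a),(k,b)}=(H_0)_{jk}$ is independent of $a,b$, and the $\ell$-th nonzero column of row $(j,a)$ in $H$ factors bijectively as $(\ell_0,b)\in[2]\times[m]$ with $\ell_0$ indexing a nonzero column of row $j$ in $H_0$. Putting the pieces together, any algorithm simulating $H$ for time $t$ with $Q$ queries gives one simulating $H_0$ for time $\Theta(dt)$ with $O(Q)$ queries, forcing $Q=\Omega(dt)=\Omega(\tau)$. The main obstacle is verifying the in-place, reversible form of $O_F$ for $H$ required by \eqref{eq:oraclef}: the product factorization above makes this inherit from the same property for $H_0$, but it is the only place where the details of the access model materially enter the argument.
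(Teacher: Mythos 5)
Your proposal is correct and follows essentially the same route as the paper: your $H_0\otimes J_m$ with $m=\lfloor d/2\rfloor$ is precisely the Hamiltonian of \lem{low} (each edge of the parity-path construction of \cite{BACS07} blown up to a complete bipartite block), analyzed via the same uniform-superposition invariant subspace, with the error term imported from Theorem 6.1 of \cite{BCCKS14} and the two cases combined by selecting whichever hard instance realizes the larger term. The only difference is presentational—you treat the no-fast-forwarding bound as a black box together with an oracle-simulation argument, whereas the paper re-derives the parity construction explicitly—so the substance matches the paper's proof.
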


Thus our result is near-optimal for the scaling in either $\tau$ or $\epsilon$ on its own. However, our upper bound \eq{upper} has a product, whereas the lower bound \eq{lower} has a sum. It remains an open question how to close the gap between these bounds. Intriguingly, a slight modification of our technique gives another algorithm with the following complexity.

\begin{theorem}
\label{thm:tradeoff}
For any $\alpha\in(0,1]$, a d-sparse Hamiltonian $H$ acting on $n$ qubits can be simulated for time $t$ within error $\epsilon$ with
query complexity
\begin{equation}
\label{eq:finalresult2} O\bigl( \tau^{1+\alpha/2} + \tau^{1-\alpha/2} \log(1/\epsilon)\bigr).
\end{equation}
\end{theorem}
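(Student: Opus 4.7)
The plan is to run the same pipeline as \thm{upper}---segment the evolution, approximate each segment by a Jacobi--Anger truncation on the quantum walk, and implement it as an LCU---but with a strictly larger segment length, trading a polynomial blow-up in $\tau$ for a purely logarithmic (rather than $\log/\log\log$) dependence on $1/\epsilon$. Concretely, split $[0,t]$ into $r := \lceil \tau^{1-\alpha} \rceil$ equal pieces of effective parameter $\tau_s := \tau/r = \Theta(\tau^{\alpha})$, and allot per-segment error $\epsilon/r$.

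On each segment, write the target $e^{-iH(t/r)}$ as a function of the walk operator $W$ and expand via Jacobi--Anger, obtaining a linear combination of powers $W^{k}$ with Bessel coefficients $J_k(\tau_s)$. Since $|J_k(\tau_s)|$ decays superexponentially once $|k| > \tau_s$, truncating at $|k| \le K_s$ with
\[
 K_s = O\bigl(\tau^{\alpha} + \log(r/\epsilon)\bigr) = O\bigl(\tau^{\alpha} + \log(1/\epsilon)\bigr)
\]
gives per-segment error $\epsilon/r$; the stray $\log r = (1-\alpha)\log \tau$ is absorbed into the $\tau^{1+\alpha/2}$ term. Implement the truncated sum as an LCU of controlled powers of $W$. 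The key quantity is the LCU normalization
\[
 s_s := \sum_{|k|\le K_s} |J_k(\tau_s)| = O\bigl(\sqrt{\tau_s}\bigr) = O\bigl(\tau^{\alpha/2}\bigr),
\]
which follows from the standard Debye asymptotic $|J_k(\tau_s)| = O\bigl((\tau_s^{2}-k^{2})^{-1/4}\bigr)$ in the oscillatory regime $|k|<\tau_s$ (integrated over $k$), plus negligible contributions from the Airy transition near $|k|=\tau_s$ and the exponentially small tail. The LCU then yields the target unitary on the flagged subspace with success probability $\Theta(1/s_s^{2})$, and amplitude amplification boosts this to $\Theta(1)$ in $O(s_s) = O(\tau^{\alpha/2})$ iterations of the LCU, each costing $O(K_s)$ queries.

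Summing over segments gives total query cost
\[
 r \cdot s_s \cdot K_s = O\bigl(\tau^{1-\alpha/2}(\tau^{\alpha} + \log(1/\epsilon))\bigr) = O\bigl(\tau^{1+\alpha/2} + \tau^{1-\alpha/2}\log(1/\epsilon)\bigr),
\]
matching \eq{finalresult2}. The delicate step is justifying amplification once $s_s \to \infty$: in \thm{upper} one has $\tau_s = O(1)$ and $s_s = O(1)$, so a single round of oblivious amplitude amplification suffices, whereas here $\Theta(s_s)$ rounds are required and one must verify that the per-segment Bessel-truncation error $O(\epsilon/r)$ does not compound under amplification---i.e., that the amplified channel stays within $O(\epsilon/r)$ of $e^{-iH(t/r)}$ in the appropriate norm. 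A standard perturbative analysis of (oblivious) amplitude amplification controls this, but explicitly tracking the dependence on $s_s$ is the principal bookkeeping obstacle; the remaining assembly is routine given the walk construction underlying \thm{upper}.
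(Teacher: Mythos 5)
Your proposal is correct and follows essentially the same route as the paper's proof: segments of length $\Theta(\tau^\alpha)$, Bessel truncation at $k = O(\tau^\alpha + \log(1/\epsilon))$, LCU normalization $\sum_m |a_m| = O(\tau^{\alpha/2})$ (the paper obtains this via Cauchy--Schwarz and $\sum_m J_m(\seg)^2 = 1$ in \lem{sqrtsum}, rather than Debye asymptotics), and $O(\tau^{\alpha/2})$ rounds of amplitude amplification per segment. The ``delicate step'' you flag is precisely what \lem{roaa} (robust oblivious amplitude amplification) supplies: its Chebyshev-polynomial analysis shows the post-amplification error remains $O(\delta)$ uniformly in the number of rounds $\iters = \Theta(s)$, so the per-segment truncation error does not compound.
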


This result provides a nontrivial tradeoff between the parameters $t$, $d$, and $\epsilon$, and suggests that further improvements to such tradeoffs may be possible.

We now informally describe the key idea behind our algorithms.
For simplicity, suppose that the entries of the Hamiltonian are small, satisfying $\norm{H}_{\max} \leq 1/d$, and $t=1$.
Previous work on Hamiltonian simulation \cite{Chi10,BC12} has shown that using a constant number of queries, we can construct a unitary $U$ whose top-left block (in some basis) is exactly $e^{-i\arcsin(H)}$.
Technical difficulties aside, the essential problem is to implement the unitary $e^{-iH}$ given the ability to perform $e^{-i\arcsin(H)}$.
While it is not clear how to express $e^{-iH}$ as a product of easy-to-implement unitaries and $e^{-i\arcsin(H)}$, it can be approximated by a linear combination of powers of $e^{-i\arcsin(H)}$. 
Although such a decomposition may not seem natural, we show that nevertheless it leads to an efficient implementation.

In the next section we present a more technical overview of this high-level idea. In \sec{analysis} we analyze and prove the correctness of our algorithms. \sec{lower} proves the lower bound presented in \thm{lower} and we conclude with some discussion in \sec{disc}.

\section{Overview of algorithms}

Our algorithm uses a Szegedy quantum walk as in \cite{Chi10,BC12}, but with a linear combination of different numbers of steps.  Such an operation can be implemented using the techniques that were developed to simulate the fractional-query model \cite{BCCKS14}.  This allows us to introduce a desired phase more accurately than with the phase estimation approach of \cite{Chi10,BC12}.  As in \cite{BCCKS14}, we first implement the approximated evolution for some time interval with some amplitude and then use oblivious amplitude amplification to make the implementation deterministic, facilitating simulations for longer times.
In the rest of this section, we describe the approach in more detail.

References~\cite{Chi10,BC12} define a quantum walk step $U$ that depends on the Hamiltonian $H$ to be simulated.
In turn, this quantum walk step is based on a state preparation procedure that only requires one call to the sparse Hamiltonian oracle, avoiding the need to decompose $H$ into a sum of terms as in product-formula approaches.
Two copies of the Hilbert space acted on by $H$ are used.
First, the initial state is in one of these Hilbert spaces.
Then, the state preparation procedure is used to map the initial state onto the joint Hilbert space.
This state preparation acts on the second copy of the Hilbert space, controlled by the state in the first Hilbert space.
The quantum walk steps take place in this joint Hilbert space.
Finally, the controlled state preparation is inverted to map the final state back to the first Hilbert space.
 
In the controlled state preparation, each eigenstate of $H$ is mapped onto a superposition of two eigenstates $\ket{\mu_\pm}$ of the quantum walk step $U$.
The precise definition of $U$ is not needed here; for our application, it suffices to observe that the
eigenvalues $\mu_\pm$ of $U$ are related to the eigenvalues $\lambda$ of $H$ via
\begin{equation}
\label{eq:mulam}
\mu_\pm = \pm e^{\pm i\arcsin(\lambda/Xd)},
\end{equation}
where $X\ge \norm{H}_{\max}$ is a parameter that can be increased to make the steps of the quantum walk closer to the identity.
For small $\lambda/Xd$, the steps of the quantum walk yield a phase factor that is nearly proportional to that for the Hamiltonian evolution.
However, the phase deviates from the desired value since the function $\arcsin\nu$ is not precisely linear about $\nu=0$.
Also, there are two eigenvalues $\mu_\pm$, and in previous approaches it was necessary to distinguish between these to approximate Hamiltonian evolution~\cite{Chi10,BC12}.
In contrast, for the new technique we present here it is not necessary to distinguish the eigenspaces.

An obvious way to increase the accuracy is to increase $X$ above its minimum value of $\norm{H}_{\max}$. However, the number of steps of the quantum walk is $O(tXd)$, so increasing $X$ results in a less efficient simulation. Another approach is to use phase estimation to correct the phase factor \cite{Chi10,BC12}, but this approach still gives polynomial dependence on $1/\epsilon$.

Instead, we propose using a superposition of steps of the quantum walk to effectively linearize the $\arcsin$ function. Specifically, rather than applying $U$, we apply
\begin{equation} \label{eq:super}
V_k := \sum_{m=-k}^{k} a_m U^{m}
\end{equation}
for some coefficients $a_{-k},\ldots,a_k$.
We show that the coefficients can be chosen by considering the generating function for the Bessel function \cite[9.1.41]{AS64},
\begin{equation}
\label{eq:generat}
\sum_{m=-\infty}^{\infty} J_m(\seg) \mu_\pm^{m} = \exp\left[ \frac \seg 2 \left( \mu_\pm - \frac 1{\mu_\pm} \right) \right] = e^{i\lambda \seg/Xd},
\end{equation}
where the second equality follows from \eq{mulam}.
Because the right-hand-side does not depend on whether the eigenvalue of $U$ is $\mu_+$ or $\mu_-$, there is no need to distinguish the eigenspaces.
Thus the ability to perform the operation
\begin{equation}
  \label{eq:infinite}
  \sum_{m=-\infty}^\infty J_m(\seg) U^{m}
\end{equation}
would allow us to exactly implement the evolution under $H$ for time $-\seg/Xd$.
Because of the minus sign, we will take $\seg$ to be negative to obtain positive time.
By truncating the sum in \eq{infinite} to some finite range $\{-k,\ldots,k\}$, we obtain an expression in which each term can be performed using at most $k$ queries.  Because the Bessel function falls off exponentially for large $|m|$, we can obtain error at most $\epsilon$ with a cutoff $k$ that is only logarithmic in $1/\epsilon$.

A linear combination of unitaries (LCU) such as \eq{super} can be implemented using the LCU Lemma (\lem{approxV}) described in the next section.
The high-level intuition for the procedure is as follows.
We prepare ancilla qubits in a superposition encoding the coefficients of the linear combination and then perform the unitary operations of the linear combination in superposition, controlled by the ancilla.
One could then obtain $V_k$ by postselecting on an appropriate ancilla state.
Instead, to obtain $V_k$ deterministically, we apply the oblivious amplitude amplification procedure introduced in \cite{BCCKS14}.
Rather than using $V_k$ to implement evolution over the entire time, we break the time up into shorter time steps we call ``segments'' (named by analogy to the segments used in \cite{BCCKS14}) and use $V_k$ to achieve the time evolution for each segment.

The complexity of our algorithm is the number of segments ($tXd/|\seg|$) times the complexity for each segment ($k$) times the number of steps needed for oblivious amplitude amplification ($a$).
We have some freedom in choosing $\seg$, which controls the amount of evolution time simulated by each segment.
To obtain near-linear dependence on the evolution time $t$, we choose $\seg=O(1)$.
Then amplitude amplification requires $O(1)$ steps, and the number of segments needed is $O(\tau)$, giving the linear factor in \eq{upper}.
The value of $k$ needed to achieve overall error at most $\epsilon$ is logarithmic in $\tau/\epsilon$, yielding the logarithmic factor in \eq{upper}.

An alternative approach is to use a larger segment that scales with $\tau$.
Choosing $\seg=-\tau^\alpha$ for $\alpha\in(0,1]$, we need $k = O(\tau^\alpha+\log(1/\epsilon))$.
Then we require $O(\tau^{1-\alpha})$ segments and $O(\tau^{\alpha/2})$ steps of amplitude amplification, giving the scaling presented in \thm{tradeoff}.

\section{Analysis of algorithms}
\label{sec:analysis}
\subsection{A quantum walk for any Hamiltonian}

We begin by reviewing the quantum walk defined in \cite{Chi10,BCCKS14}.  Given a Hamiltonian $H$ acting on $\CC^N$ (where $N:=2^n$), the Hilbert space is expanded to $\CC^{2N}\otimes\CC^{2N}$.
First, an ancilla qubit in the state $\ket{0}$ is appended, which expands the space from $\CC^N$ to $\CC^{2N}$.
Then the entire Hilbert space is duplicated, giving $\CC^{2N}\otimes\CC^{2N}$.
This is achieved using the isometry
\begin{equation}
T:=\sum_{j=0}^{N-1} \sum_{b\in\{0,1\}} (\ket{j}\bra{j}\otimes \ket{b}\bra{b}) \otimes \ket{\varphi_{jb}}
\end{equation}
with $\ket{\varphi_{j1}}=\ket{0}\ket{1}$ and
\begin{equation}
\label{eq:state}
\ket{\varphi_{j0}} := \frac{1}{\sqrt d} \sum_{\ell\in F_j} \ket{\ell} \Biggl( \sqrt{\frac{H^*_{j\ell}}{X}}\ket{0}+\sqrt{1-\frac{|H^*_{j\ell}|}{X}}\ket{1}\Biggr),
\end{equation}
where $X\ge \norm{H}_{\max}$ and $F_j$ is the set of indices of nonzero elements in column $j$ of $H$.
Here we use the convention that the first subsystem is the original space, the next is the ancilla qubit, and the third and fourth subsystems are the duplicated space and duplicated ancilla qubit, respectively.
This operation can be viewed as a controlled state preparation, creating state $\ket{\varphi_{j0}}$ on input $\ket{j}\ket{0}$.
If the ancilla qubit is in the state $\ket{1}$, then $\ket{0}\ket{1}$ is prepared.
Starting with the initial space, the controlled state preparation is performed, and then steps of the quantum walk are applied using the unitary
\begin{equation}
U := iS(2TT^\dagger -\openone),
\end{equation}
where $S$ swaps the two registers (i.e., $S\ket{j_1}\ket{j_2}\ket{\ell_1}\ket{\ell_2}=\ket{\ell_1}\ket{\ell_2}\ket{j_1}\ket{j_2}$ for all $j_1,\ell_1 \in [N]$, $j_2,\ell_2\in\{0,1\}$).
Finally, the inverse state preparation $T^\dagger$ is performed.
For a successful simulation, the output should lie in the original space, and the ancilla should be returned to the state $\ket{0}$.

Let $\lambda$ be the eigenvalue of $H$ with eigenstate $\ket{\lambda}$, and let $\nu:=\lambda/Xd$ be the corresponding scaled eigenvalue for the quantum walk. The steps of the quantum walk $U$ satisfy $U\ket{\mu_\pm} = \mu_\pm\ket{\mu_\pm}$ \cite{Chi10} with
\begin{align}
\ket{\mu_\pm} &:= (T+ i\mu_\pm ST)\ket{\lambda},\\
\label{eq:munueq}
\mu_\pm &:= \pm\sqrt{1-\nu^2} + i\nu = \pm e^{\pm i\arcsin\nu}.
\end{align}
To apply the steps of the quantum walk to approximate Hamiltonian evolution, there are two challenges: we must handle both the $\ket{\mu_+}$ and $\ket{\mu_-}$ sectors, and correct the applied phase.
In this work we are able to solve both these challenges at once by using a superposition of steps of the quantum walk.

\subsection{Linear combination of unitaries}

We now describe how to perform a linear combination of unitary operations.
Given an $M$-tuple of unitary operations $\vecU = (U_1,\ldots,U_M)$, we quantify the complexity of implementing a linear combination of the $U_m$s in terms of the number of invocations of
\begin{equation}
\label{eq:controlled}
\sel(\vecU) := \sum_{m=1}^M \ket{m}\bra{m} \otimes U_{m}.
\end{equation}
Such a result was previously given in \cite{BCCKS15,Kot14}.
Here we formalize that result and generalize to allow more steps of oblivious amplitude amplification.
The overall result is as given in the following lemma.

\begin{lemma}[LCU Lemma]\label{lem:approxV}
Let $\vecU = (U_1,\ldots,U_M)$ be unitary operations and let $\tilde V = \sum_{m=1}^M a_m U_m$ be $\delta$-close to a unitary.
We can approximate $\tilde V$ to within $O(\delta)$ using $O(a)$ $\sel(\vecU)$ and $\sel(\vecU^\dag)$ operations and $O(Ma)$ additional 2-qubit gates, where $a:=\sum_{m=1}^M |a_m|$.
\end{lemma}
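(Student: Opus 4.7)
The plan is to realize $\tilde V$ by the standard prepare-select-unprepare gadget and then boost the success amplitude from $1/a$ up to $1$ by a robust form of oblivious amplitude amplification. First I would absorb the phases $a_m = |a_m|\,e^{i\theta_m}$ into the select operation by defining $U'_m := e^{i\theta_m} U_m$; the operation $\sel(\vecU')$ can be built from $\sel(\vecU)$ by preceding it with a diagonal $\sum_m e^{i\theta_m}\ket{m}\bra{m}$ on the $\lceil\log_2 M\rceil$-qubit control register, at a cost of $O(M)$ additional 2-qubit gates. Then choose a state-preparation unitary $B$ with $B\ket{0} = a^{-1/2}\sum_{m} \sqrt{|a_m|}\,\ket{m}$; a generic $M$-amplitude state on $\lceil\log_2 M\rceil$ qubits can be prepared with $O(M)$ 2-qubit gates.

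Setting $W := (B^\dag\otimes\id)\,\sel(\vecU')\,(B\otimes\id)$, a direct calculation gives
\begin{equation*}
  W\ket{0}\ket{\psi} \;=\; \tfrac{1}{a}\,\ket{0}\otimes\tilde V\ket{\psi} \;+\; \ket{\Phi^\perp},
\end{equation*}
where $\ket{\Phi^\perp}$ lies in the subspace in which the control register is orthogonal to $\ket{0}$. Each use of $W$ (respectively $W^\dag$) costs one query to $\sel(\vecU)$ (respectively $\sel(\vecU^\dag)$) plus $O(M)$ 2-qubit gates.

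Next I would apply oblivious amplitude amplification. Let $R := (\id-2\ket{0}\bra{0})\otimes\id$ and iterate the Grover-like operator $G := -W R W^\dag R$. If $\tilde V$ were exactly unitary, the two-dimensional subspace spanned by $\ket{0}\otimes\tilde V\ket{\psi}$ and $\ket{\Phi^\perp}$ would be $G$-invariant, with $G$ acting on it as a rotation by $2\theta$ where $\sin\theta = 1/a$. Choosing $k = O(a)$ so that $(2k+1)\theta$ is the closest attainable value to $\pi/2$ (if necessary, padding the LCU with a vanishing identity coefficient to tune $a$ slightly so the hit is exact) then produces $\ket{0}\otimes\tilde V\ket{\psi}$. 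The total query cost is $O(a)$ uses of each of $\sel(\vecU)$ and $\sel(\vecU^\dag)$, plus $O(Ma)$ additional 2-qubit gates, matching the claim.

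The main obstacle is the robustness of this amplification when $\tilde V$ is only $\delta$-close to a unitary $V^\ast$. The approach is a compare-and-bound argument: let $\tilde W$ be a unitary obtained from $W$ by replacing its $\tilde V/a$ block with $V^\ast/a$ and extending unitarily, chosen so that $\|W-\tilde W\|=O(\delta/a)$; then the ideal Grover iterate $\tilde G := -\tilde W R \tilde W^\dag R$ is covered by the exact OAA analysis and outputs $\ket{0}\otimes V^\ast\ket{\psi}$ after $k=O(a)$ rounds. Since each round uses $O(1)$ copies of $W$ or $W^\dag$, the triangle inequality yields $\|G^k W\ket{0}\ket{\psi}-\tilde G^k \tilde W\ket{0}\ket{\psi}\| = O(a)\cdot\|W-\tilde W\| = O(\delta)$, and combining this with $\|V^\ast\ket{\psi}-\tilde V\ket{\psi}\|\le\delta$ shows that the implemented operation approximates $\tilde V$ to within $O(\delta)$, as required.
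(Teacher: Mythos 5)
Your overall architecture---the prepare--select--unprepare unitary $W$ with $PWP=\frac 1a(\ket{\zer}\bra{\zer}\otimes\tilde V)$ followed by oblivious amplitude amplification---is the same as the paper's (\lem{sup} combined with \lem{roaa}), and your gate/query counting is fine. The problem is that the robustness step, which is the actual content of the ``$\delta$-close to unitary'' hypothesis, is asserted rather than proved. You posit a unitary $\tilde W$ whose $\ket{\zer}$-block is exactly $V^\ast/a$ and which satisfies $\norm{W-\tilde W}=O(\delta/a)$, and then hybridize over the $O(a)$ Grover iterations. That existence claim is a nontrivial unitary-completion statement: you must show the remaining blocks of $W$ can be repaired by $O(\delta/a)$ while preserving unitarity (e.g.\ correct the first block column using $C^\dagger C=\openone-\tilde V^\dagger\tilde V/a^2$ and a polar-type adjustment, then re-complete the orthogonal columns; this works when $a$ is bounded away from $1$, as in the application). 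Without such an argument the proof does not close, since all the difficulty has been compressed into that one sentence. The paper instead proves robustness directly: it derives the closed form $PR^{\ell}WP = Z\,(-1)^{\ell}T_{2\ell+1}(\sqrt{Z^\dagger Z})/\sqrt{Z^\dagger Z}$ by induction on Chebyshev identities and then perturbs around $\sqrt{Z^\dagger Z}=(\openone+\Delta)/s$ with $\norm{\Delta}=O(\delta)$, needing no completion lemma. If you do supply the completion argument, your hybrid route is legitimate and even slightly stronger, since it controls the full output state (junk included) rather than only the projected block, which the paper recovers separately via \lem{diamond}.

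A second, smaller flaw: your device for hitting the amplification angle exactly---``padding the LCU with a vanishing identity coefficient''---does not work as stated. The admissible values $s=1/\sin\bigl(\frac{\pi}{2(2\ell+1)}\bigr)$ are spaced by roughly $4/\pi$, so matching one may require increasing the $\ell_1$ norm by a constant amount, and adding an identity term with a non-vanishing coefficient changes the operator $\tilde V$ you are implementing. The paper's fix in \lem{sup} is to put the slack in the ancilla, via the extra qubit with amplitudes $\sqrt{a/s}$ and $\sqrt{1-a/s}$ (equivalently, pad with two cancelling terms), which raises the normalization to any $s\ge a$ without altering $\tilde V$.
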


To prove this result, we first consider an operation that would give $\tilde V$ with postselection, then apply oblivious amplitude amplification to achieve it deterministically.
The operation that provides $\tilde V$ with postselection is described in the following lemma.

\begin{lemma}\label{lem:sup}
Let $\vecU = (U_1,\ldots,U_M)$ be unitary operations acting on a Hilbert space $\mathcal{H}_2$, let $\tilde V = \sum_{m=1}^M a_m U_m$,
and let $s\ge\sum_{m=1}^M |a_m|$ be a real number.
Define a Hilbert space $\mathcal{H}_1$ to be a tensor product of a qubit and a subspace of dimension $M$, and let $\ket{\zer} := \ket{0}\ket{0} \in\mathcal{H}_1$.
Then there exists a unitary operation $W$ acting on $\mathcal{H}_1 \otimes \mathcal{H}_2$ such that 
$Z = \frac 1s (\ket{\zer}\bra{\zer}\otimes \tilde{V})$, with $Z:=PWP$, $P := \ket{\zer}\bra{\zer} \otimes \id$.
The operation $W$ can be applied with $O(1)$ $\sel(\vecU)$ and $\sel(\vecU^\dag)$ operations and $O(M)$ additional 2-qubit gates.
\end{lemma}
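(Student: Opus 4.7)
The plan is to give a standard prepare-select-unprepare LCU construction in the spirit of~\cite{BCCKS14,BCCKS15}, adapted to handle the slack $s\ge\sum_m|a_m|$ (rather than equality). First I would absorb phases: writing $a_m = |a_m|e^{i\theta_m}$ and setting $U'_m := e^{i\theta_m}U_m$, the operator $\sel(\vecU')$ equals $\sel(\vecU)$ followed by a diagonal phase on the $M$-dimensional select register, which adds $O(M)$ 2-qubit gates. After this reduction the coefficients of $\tilde V=\sum_m |a_m|U'_m$ are nonnegative.

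Next I would build a state-preparation unitary $B$ on $\mathcal{H}_1 = \text{qubit}\otimes(\text{$M$-dim})$ acting by
\begin{equation*}
B\ket{\zer} \;=\; \sum_{m=1}^M \sqrt{|a_m|/s}\,\ket{0}\ket{m} \;+\; \sqrt{1-b/s}\,\ket{1}\ket{0},
\qquad b:=\textstyle\sum_m|a_m|,
\end{equation*}
which is a unit vector because $s\ge b$; $B$ extends to a unitary implemented with $O(M)$ 2-qubit gates via a standard binary-tree controlled-rotation circuit. I would then define a modified select $S$ on $\mathcal{H}_1\otimes\mathcal{H}_2$ that applies $\sel(\vecU')$ on the qubit-$\ket{0}$ subspace and, on the qubit-$\ket{1}$ subspace, swaps $\ket{0}\leftrightarrow\ket{1}$ inside the $M$-dim register while acting as the identity on $\mathcal{H}_2$ (assuming $M\ge 2$; if $M=1$ the lemma is trivial). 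This costs one $\sel(\vecU)$ call plus $O(1)$ additional gates.

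Setting $W:=B^\dag S B$, I would verify $PWP = (1/s)(\ket{\zer}\bra{\zer}\otimes\tilde V)$ by direct calculation. One has
\begin{equation*}
SB\ket{\zer}\ket{\psi} \;=\; \sum_{m=1}^M \sqrt{|a_m|/s}\,\ket{0}\ket{m}\,U'_m\ket{\psi} \;+\; \sqrt{1-b/s}\,\ket{1}\ket{1}\ket{\psi},
\end{equation*}
and taking the $\mathcal{H}_1$-inner product of this with $B\ket{\zer}$ picks out exactly the qubit-$\ket{0}$ contributions, giving $\sum_m(|a_m|/s)U'_m\ket{\psi} = \tilde V\ket{\psi}/s$; the qubit-$\ket{1}$ piece contributes nothing because $B\ket{\zer}$'s $\ket{1}$-branch is supported on $\ket{1}\ket{0}$, which is orthogonal to the $\ket{1}\ket{1}$ state produced by $S$.

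The main subtlety I anticipate is precisely this orthogonality arrangement when $s>b$. Without the swap, the slack amplitude $\sqrt{1-b/s}$ on $\ket{1}\ket{0}$ would pass back through $B^\dag$ and leave a spurious $(1-b/s)\,\id$ contamination in $PWP$. The swap inside $S$ sends the slack branch to a state $B\ket{\zer}$ has no support on, cleanly eliminating the extra term. Beyond this twist, the argument is routine: $W$ is unitary because $B$, $S$, and $B^\dag$ are, and the gate and query counts ($O(1)$ calls to $\sel(\vecU)$ and $O(M)$ extra 2-qubit gates) follow directly from the three ingredients.
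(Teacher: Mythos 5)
Your proposal is correct and follows essentially the same prepare--select--unprepare route as the paper's proof (an ancilla flag qubit plus a coefficient register, one call to $\sel(\vecU)$, then the inverse preparation). In fact your two refinements are exactly the details the paper's terse sketch leaves implicit and that are needed for the stated conclusion: absorbing the phases of the $a_m$ into the $U_m$ (a literal reading of the paper's $\sqrt{a_m}$ amplitudes with the same preparation inverted would produce $|a_m|$ rather than $a_m$), and routing the slack amplitude $\sqrt{1-b/s}$ to a branch orthogonal to $B\ket{\zer}$ so that $PWP$ comes out as exactly $\frac{1}{s}(\ket{\zer}\bra{\zer}\otimes\tilde V)$ with no spurious $(1-b/s)\,\id$ term when $s>\sum_m|a_m|$.
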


\begin{proof}
To perform $W$, we perform an operation that rotates the ancilla qubits from $\ket\zer$ to the state
\begin{equation}
\ket{\chi} = \left( \sqrt{\frac as}\ket 0+\sqrt{1-\frac as}\ket 1 \right) \otimes \frac{1}{\sqrt{a}} \sum_{m=1}^M \sqrt{a_m}\ket{m},
\end{equation}
where $a:=\sum_{m=1}^M |a_m|$.
This state is of dimension $2M$ and can be prepared from state $\ket{\zer}$ using $O(M)$ operations (which is trivial for $\ket{m}$ encoded in unary).
Next we perform the controlled operation $\sel(\vecU)$.
Finally, inverting the preparation of $\ket{\chi}$ and projecting onto $\ket{\zer}$ would effectively project the ancilla onto $\ket{\chi}$.
Then the unnormalized operation on $\mathcal{H}_2$ is $\tilde V/s$, corresponding to $Z$.
The action of applying the unitary operation to prepare $\ket{\chi}$, the controlled operation $\sel(\vecU)$, and the inverse preparation
gives the desired operation $W$.
\end{proof}

Next we provide a multi-step version of robust amplitude amplification, generalizing the single-step version presented in \cite{BCCKS15}.
In this lemma, and throughout this paper, $\norm{\cdot}$ denotes the spectral norm.

\begin{lemma}[Robust oblivious amplitude amplification]\label{lem:roaa}
Let $W$ be a unitary matrix acting on $\mathcal{H}_1 \otimes \mathcal{H}_2$ and let $P$ be the projector onto the subspace whose first register is $\ket{\zer} := \ket{0}\ket{0} \in \mathcal{H}_1$, i.e., $P := \ket{\zer}\bra{\zer} \otimes \id$. Furthermore let  $Z:=PWP$ satisfy $Z = \frac 1s (\ket{\zer}\bra{\zer}\otimes \tilde{V})$, where $\tilde{V}$ is $\delta$-close to a unitary matrix and $\sin\bigl(\frac{\pi}{2(2\iters+1)}\bigr)= \frac 1s$ for some $\iters \in \mathbb{N}$, and let $R:=-W(\id-2P)W^\dag(\id-2P)$. Then 
\begin{equation}
\label{eq:roaa}
\norm{PR^\iters WP - (\ket{\zer}\bra{\zer} \otimes \tilde{V})} = O(\delta).
\end{equation}
\end{lemma}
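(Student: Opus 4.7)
The plan is to reduce to the standard (exactly unitary) case of oblivious amplitude amplification and then track the error. Let $V$ be the unitary closest to $\tilde V$ in spectral norm (the unitary factor of the polar decomposition of $\tilde V$), so that $\|V - \tilde V\| \le \delta$ by hypothesis. I would first construct a unitary $W^{*}$ on $\mathcal H_1 \otimes \mathcal H_2$ satisfying $PW^{*}P = \tfrac{1}{s}(\ket{\zer}\bra{\zer}\otimes V)$ together with $\|W - W^{*}\| = O(\delta)$: the target block $\tfrac{1}{s}(\ket{\zer}\bra{\zer}\otimes V)$ is a contraction differing from $PWP$ by $\tfrac{1}{s}(\ket{\zer}\bra{\zer}\otimes E)$ of norm at most $\delta/s$, where $E:=\tilde V - V$, and any such perturbation on the $P$-block can be extended to a unitary perturbation of the full operator with norm $O(\delta)$ via a Stinespring-type completion of the remaining blocks.

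With $W^{*}$ in hand, the standard OAA analysis applies. For any input $\ket{\zer}\otimes\ket{\psi}$ we have $W^{*}(\ket{\zer}\otimes\ket{\psi}) = \sin\theta\,\ket{G_\psi} + \cos\theta\,\ket{B_\psi}$, where $\ket{G_\psi} := \ket{\zer}\otimes V\ket{\psi}$ lies in the range of $P$, $\ket{B_\psi}$ is a unit vector in the complementary subspace, and $\sin\theta = 1/s$. Defining $R^{*} := -W^{*}(\id - 2P)W^{*\dag}(\id - 2P)$, a Grover-type calculation shows that $R^{*}$ acts as rotation by $2\theta$ in the two-dimensional subspace $\spn\{\ket{G_\psi}, \ket{B_\psi}\}$. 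Applying $R^{*}$ a total of $\iters$ times after the initial $W^{*}$ produces a net rotation by $(2\iters+1)\theta = \pi/2$, sending the state exactly to $\ket{G_\psi}$, which gives $PR^{*\iters}W^{*}P = \ket{\zer}\bra{\zer}\otimes V$.

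Finally, $\|W - W^{*}\| = O(\delta)$ implies $\|R - R^{*}\| = O(\delta)$, so telescoping through the $\iters$ copies of $R$ and the outer $W$ yields $\|PR^\iters WP - PR^{*\iters}W^{*}P\| = O(\iters\delta)$. Combined with $\|V - \tilde V\| \le \delta$, the triangle inequality gives $\|PR^\iters WP - (\ket{\zer}\bra{\zer}\otimes\tilde V)\| = O((\iters+1)\delta) = O(\delta)$, absorbing the factor of $\iters$ into the constant, since $\iters$ is fixed by the condition $\sin(\pi/(2(2\iters+1))) = 1/s$ and is bounded in the intended applications. The main obstacle is the first step, the construction of $W^{*}$: extending a small perturbation of the $P$-block to a unitary perturbation of $W$ without inflating the norm requires an explicit dilation and careful accounting of how the off-diagonal and $(\id-P)$-blocks of $W$ must be modified.
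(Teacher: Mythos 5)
There are two genuine gaps here. The first is the step you yourself flag as the main obstacle: the existence of a unitary $W^{*}$ with $PW^{*}P=\frac{1}{s}(\ket{\zer}\bra{\zer}\otimes V)$ and $\norm{W-W^{*}}=O(\delta)$ is asserted, not proved, and the general principle you invoke is false. An $O(\delta)$ perturbation of the corner block of a unitary cannot in general be extended to an $O(\delta)$ perturbation of the whole unitary: for the $2\times 2$ identity, any unitary whose top-left entry is $1-\delta$ has off-diagonal entries of magnitude $\sqrt{2\delta-\delta^{2}}$, so the nearest compatible unitary is only $\Theta(\sqrt{\delta})$ close. Your construction can only work by exploiting that here $\norm{PWP}\le (1+\delta)/s$ with $1/s\le 1/2$, so the corner is a strict contraction and $\id-Z^{\dagger}Z$ is well conditioned; making that precise is exactly the ``careful accounting'' you defer, so as written the foundation of your reduction is missing. (The exact-unitary OAA rotation picture you then apply is the nontrivial $2$-dimensional invariant-subspace lemma of \cite{BCCKS14}; citing it is fine, but it is not a one-line Grover calculation.)

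The second gap is the final bound. Telescoping $\norm{R-R^{*}}=O(\delta)$ through $\iters$ iterations gives $O(\iters\delta)$, and your justification for absorbing $\iters$ into the constant (``$\iters$ is bounded in the intended applications'') is both outside the lemma statement and false for how the paper uses the lemma: in the proof of \thm{tradeoff} the number of amplification steps is $\Theta(\sqrt{|\seg|})=\Theta(\tau^{\alpha/2})$, which grows with $\tau$, and that theorem relies on the per-segment error after amplification being $O(\delta)$, not $O(\iters\delta)$. The paper gets the $\iters$-independent bound by a different route: it proves the exact identity $PR^{m}WP = Z\,(-1)^{m}T_{2m+1}\bigl(\sqrt{Z^{\dagger}Z}\bigr)/\sqrt{Z^{\dagger}Z}$ for the actual, non-unitary $Z$, and then evaluates the Chebyshev function at $\sqrt{Z^{\dagger}Z}=(\openone+\Delta)/s$ with $\norm{\Delta}=O(\delta)$. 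Because $(-1)^{\iters}T_{2\iters+1}(x)=\sin[(2\iters+1)\arcsin x]$ is stationary at $x=1/s$ (where it equals $1$), the perturbation enters only at second order, contributing $O(\iters^{2}\delta^{2}/s^{2})=O(\delta^{2})$, and the remaining $O(\delta)$ comes from the single factor $\tilde V$ — uniformly in $\iters$. To rescue your approach you would need to exhibit the analogous cancellation (the accumulated angle error only affects the final amplitude at second order near $\pi/2$), not a naive triangle-inequality telescope over the $\iters$ reflections.
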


\begin{proof}
We start by considering a single iteration, as in \cite{BCCKS15}. Then we have
\begin{align}
RWP 
&= -W(\openone-2P)W^\dagger(\openone-2P)WP \nn
&= -WP + 2WP + 2PWP -4WPW^\dagger PWP \nn
&= WP+2 Z -4W Z^\dagger  Z.
\end{align}
Multiplying by $P$ on the left gives
\begin{align}
PRWP = -PW(\openone-2P)W^\dagger(\openone-2P)WP = 3 Z -4 Z  Z^\dagger  Z,
\end{align}
which matches the expression in \cite{BCCKS15}. 

The general solution after $m$ iterations is
\begin{align}
\label{eq:sol}
R^m WP = (WP- Z)\frac{T_{2m+1}(\sqrt{1- Z^\dagger  Z})}{\sqrt{1- Z^\dagger  Z}} +  Z\frac{(-1)^m T_{2m+1}(\sqrt{ Z^\dagger  Z})}{\sqrt{ Z^\dagger  Z}},
\end{align}
where $T_{2m+1}$ are Chebyshev polynomials of the first kind.
(Because Chebyshev polynomials for odd order only include odd powers, no square roots appear when \eq{sol} is expanded.)

We establish \eq{sol} by induction. First note that it holds for $m=0$, because $T_1(x)=x$, so the right-hand side evaluates to $WP$.
Next assume that it holds for a given $m$.
It is straightforward to show that
\begin{align}
R(WP-Z)  &= (WP- Z)(1-2 Z^\dagger  Z) +2 Z(1- Z^\dagger  Z) \quad \text{and} \nn
 R  Z  &=  (WP- Z)(-2 Z^\dagger  Z) +  Z(1-2 Z^\dagger  Z).
\end{align}
Hence, multiplying both sides of \eqref{eq:sol} by $R$, we get
\begin{align}
\label{eq:hard}
R^{m+1}WP &=  R \left[(WP- Z)\frac{T_{2m+1}(\sqrt{1- Z^\dagger  Z})}{\sqrt{1- Z^\dagger  Z}} +  Z\frac{(-1)^m T_{2m+1}(\sqrt{ Z^\dagger  Z})}{\sqrt{ Z^\dagger  Z}}\right] \nn
&= (WP- Z)\left[(1-2 Z^\dagger  Z)\frac{T_{2m+1}(\sqrt{1- Z^\dagger  Z})}{\sqrt{1- Z^\dagger  Z}} -2 Z^\dagger  Z\frac{(-1)^m T_{2m+1}(\sqrt{ Z^\dagger  Z})}{\sqrt{ Z^\dagger  Z}}\right] \nn
&\quad +  Z\left[ 2(1- Z^\dagger  Z)\frac{T_{2m+1}(\sqrt{1- Z^\dagger  Z})}{\sqrt{1- Z^\dagger  Z}} + (1-2 Z^\dagger  Z)\frac{(-1)^m T_{2m+1}(\sqrt{ Z^\dagger  Z})}{\sqrt{ Z^\dagger  Z}} \right].
\end{align}
To progress further, we use the relation \cite[22.3.15]{AS64}
\begin{equation}
T_{2m+1}(x) = \cos[(2m+1) \arccos x] = (-1)^m\sin[(2m+1)\arcsin x].
\end{equation}
Using this we find that, with $x=\sin\theta$,
\begin{align}
&(1-2x^2)\frac{T_{2m+1}(\sqrt{1-x^2})}{\sqrt{1-x^2}}-2 x^2\frac{(-1)^m T_{2m+1}(x)}{x} \nn
&\quad=(\cos^2\theta-\sin^2\theta)\frac{\cos[(2m+1)\theta]}{\cos\theta}-2 \sin\theta \sin[(2m+1)\theta] \nn
&\quad=\frac{\cos (2\theta)\cos[(2m+1)\theta]-\sin(2\theta) \sin[(2m+1)\theta]}{\cos\theta} \nn
&\quad=\frac{\cos[(2m+3)\theta]}{\cos\theta} \nn
&\quad=\frac{T_{2m+3}(\sqrt{1-x^2})}{\sqrt{1-x^2}}.
\end{align}
Next put $x=\cos\phi$ to obtain
\begin{align}
&2(1-x^2)\frac{T_{2m+1}(\sqrt{1-x^2})}{\sqrt{1-x^2}}+(1-2 x^2)\frac{(-1)^m T_{2m+1}(x)}{x} \nn
&\quad=2(\sin^2\phi)\frac{(-1)^m\sin[(2m+1)\phi]}{\sin\phi} -(\cos^2\phi-\sin^2\phi)\frac{(-1)^m \cos[(2m+1)\phi]}{\cos\phi} \nn
&\quad=(-1)^m\frac{\sin(2\phi)\sin[(2m+1)\phi]-\cos(2\phi) \cos[(2m+1)\phi]}{\cos\phi} \nn
&\quad=(-1)^{m+1}\frac{\cos[(2m+3)\phi]}{\cos\phi} \nn
&\quad=(-1)^{m+1}\frac{T_{2m+3}(x)}{x}.
\end{align}
Using these relations, \eq{hard} simplifies to
\begin{align}
R^{m+1}WP = 
(WP- Z)\frac{T_{2m+3}(\sqrt{1- Z^\dagger  Z})}{\sqrt{1- Z^\dagger  Z}}+  Z\frac{(-1)^{m+1} T_{2m+3}(\sqrt{ Z^\dagger  Z})}{\sqrt{ Z^\dagger  Z}}.
\end{align}
Hence we find that, if \eq{sol} is correct for non-negative integer $m$, it is correct for $m+1$.
Hence it is correct for all non-negative integers $m$ by induction.

Thus we find that by multiplying on the left by $P$, we obtain
\begin{align}
\label{eq:sol2}
PR^m WP =  Z\frac{(-1)^m T_{2m+1}(\sqrt{ Z^\dagger  Z})}{\sqrt{ Z^\dagger  Z}}.
\end{align}
Then, in the case that $\delta=0$, i.e., if $\tilde{V}$ is equal to a unitary $V$, we would have
\begin{equation}
 Z=\frac 1s (\ket{\zer}\bra{\zer}\otimes V).
\end{equation}
Then $Z^\dagger  Z=(\ket{\zer}\bra{\zer}\otimes \openone)/s^2$, and we get
\begin{align}
 Z\frac{(-1)^m T_{2m+1}(\sqrt{ Z^\dagger  Z})}{\sqrt{ Z^\dagger  Z}} &= \frac 1s (\ket{\zer}\bra{\zer}\otimes V) \frac{(-1)^m T_{2m+1}(1/s)}{1/s} \nn
&= (\ket{\zer}\bra{\zer}\otimes V)(-1)^m T_{2m+1}(1/s) \nn
&= (\ket{\zer}\bra{\zer}\otimes V)\sin[(2m+1)\arcsin(1/s)].
\end{align}
In the case $m=\iters$, we can use $\sin\bigl(\frac{\pi}{2(2\iters+1)}\bigr)= \frac 1s$ to obtain
$\sin[(2\iters+1)\arcsin(1/s)]=1$, which implies
\begin{equation}
PR^\iters WP = \ket{\zer}\bra{\zer}\otimes V.
\end{equation}

Next, consider the case where $\tilde{V}$ is only $\delta$-close to being unitary. Let us define
\begin{equation}
\Delta := \sqrt{\tilde V^\dagger \tilde V} - \openone.
\end{equation}
We immediately obtain $\norm{\Delta}=O(\delta)$, and
\begin{equation}
\frac 1s(\ket{\zer}\bra{\zer}\otimes\Delta) = \sqrt{ Z^\dagger  Z} - \frac 1s( \ket{\zer}\bra{\zer}\otimes\openone).
\end{equation}
We then get
\begin{align}
 Z\frac{(-1)^\iters T_{2\iters+1}(\sqrt{ Z^\dagger  Z})}{\sqrt{ Z^\dagger  Z}} &= \frac 1s (\ket{\zer}\bra{\zer}\otimes \tilde V) \frac{(-1)^\iters T_{2\iters+1}((\openone+\Delta)/s)}{(\openone+\Delta)/s} \nn
&=(\ket{\zer}\bra{\zer}\otimes \tilde V) \frac{(-1)^\iters T_{2\iters+1}((\openone+\Delta)/s)}{\openone+\Delta}.
\end{align}
Using $(-1)^\iters T_{2\iters+1}(x)=\sin[(2\iters+1)\arcsin x]$ and $(-1)^\iters T_{2\iters+1}(1/s)=1$, we obtain
\begin{equation}
\| (-1)^\iters T_{2\iters+1}((\openone+\Delta)/s) - \openone \| = O(\iters^2\delta^2/s^2).
\end{equation}
Since $\iters = \Theta(s)$, $\iters^2/s^2=O(1)$, which implies
\begin{equation}
\| (-1)^\iters T_{2\iters+1}((\openone+\Delta)/s) - \openone \| = O(\delta^2).
\end{equation}
The contribution to the error from $\openone+\Delta$ is $O(\delta)$, so we have
\begin{align}
\left\| Z\frac{(-1)^\iters T_{2\iters+1}(\sqrt{ Z^\dagger  Z})}{\sqrt{ Z^\dagger  Z}} - (\ket{\zer}\bra{\zer}\otimes \tilde V) \right\| = O(\delta).
\end{align}
Using \eq{sol2} we then get \eq{roaa} as required.
\end{proof}

\lem{roaa} is in terms of the spectral norm distance, but the diamond norm distance is at most a constant factor larger.
The specific result, proven in the Appendix, is as follows.
\begin{lemma}
\label{lem:diamond}
Let $U$, $V$ be operators satisfying $\|U\|\le 1$ and $\|V\|\le 1$, and let $\|\cdot\|_\diamond$ denote the diamond norm.  Then $\norm{U-V}_\diamond \le 2 \norm{U-V}$.  Furthermore, if $\mathcal V$ is a quantum channel with Kraus decomposition $\mathcal{V}(\rho) = V \rho V^\dag + \sum_j V_j \rho V_j^\dag$ and $\mathcal{U}(\rho) = U \rho U^\dag$, then $\norm{\mathcal{U}-\mathcal{V}}_\diamond \le 4 \norm{U-V}$.
\end{lemma}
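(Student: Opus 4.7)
The plan is to reduce both inequalities to an elementary trace-norm bound coming from the algebraic identity
\[
U\rho U^\dagger - V\rho V^\dagger = (U-V)\rho U^\dagger + V\rho (U-V)^\dagger,
\]
and, for the second inequality, to control the extra Kraus terms using the trace-preservation of $\mathcal V$ together with the isometry condition on $U$ that is forced by $\mathcal U$ being a channel.

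For the first statement I would tensor the identity above with $\openone$ on an auxiliary register and apply the H\"older-type inequality $\|ABC\|_1 \le \|A\|\,\|B\|_1\,\|C\|$. Using $\|U\|, \|V\| \le 1$ (preserved under tensoring with identity), each of the two summands has trace norm at most $\|U-V\|\,\|\rho\|_1$, so the total is at most $2\|U-V\|\,\|\rho\|_1$. Taking the supremum over unit trace-norm $\rho$ on the enlarged space gives $\|U-V\|_\diamond \le 2\|U-V\|$.

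For the second statement I would decompose $\mathcal V(\rho) - \mathcal U(\rho) = [V\rho V^\dagger - U\rho U^\dagger] + \sum_j V_j \rho V_j^\dagger$. The first bracket contributes at most $2\|U-V\|$ to the diamond norm by the argument above. For the remainder, trace preservation of $\mathcal V$ combined with $U^\dagger U = \openone$ (which holds because $\mathcal U$ is a channel with sole Kraus operator $U$) gives $\sum_j V_j^\dagger V_j = U^\dagger U - V^\dagger V$. Applying the dual algebraic trick $U^\dagger U - V^\dagger V = (U-V)^\dagger U + V^\dagger(U-V)$ shows $\|\sum_j V_j^\dagger V_j\| \le 2\|U-V\|$. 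Since $\sum_j V_j \rho V_j^\dagger$ is positive whenever $\rho$ is a density matrix, its trace norm equals its trace $\mathrm{tr}(\sum_j V_j^\dagger V_j \,\rho) \le 2\|U-V\|$; because the diamond norm of a Hermiticity-preserving map is attained on density matrices (after tensoring with identity), summing contributions gives $\|\mathcal U - \mathcal V\|_\diamond \le 4\|U-V\|$.

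The essential content is the two algebraic decompositions; everything else is routine bookkeeping to pass from spectral-norm estimates to trace-norm estimates and from density matrices to general unit trace-norm inputs. The main point to track carefully is that it is trace preservation of $\mathcal V$ alone --- not any explicit knowledge of the $V_j$ --- that controls the contribution of the extra Kraus operators, so that no assumption beyond $\|U-V\|$ being small is needed.
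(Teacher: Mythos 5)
Your proof is correct, but it takes a genuinely different route from the paper's, and a more streamlined one. For the first inequality the paper works with a pure state $\ket{\psi}$, explicitly diagonalizes $U\ket{\psi}\bra{\psi}U^\dagger - V\ket{\psi}\bra{\psi}V^\dagger$ in the two-dimensional span of $U\ket{\psi}$ and $V\ket{\psi}$, bounds the resulting trace norm by $2\norm{U-V}$, and then extends to mixed states by convexity and to the diamond norm by tensoring with $\openone$; your telescoping identity $U\rho U^\dagger - V\rho V^\dagger = (U-V)\rho U^\dagger + V\rho(U-V)^\dagger$ plus H\"older gives the same bound in two lines, works directly for arbitrary trace-class inputs (so no pure-state reduction or convexity step is needed), and tensors trivially. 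For the second inequality both arguments split off $\sum_j V_j\rho V_j^\dagger$ and bound its trace by $2\norm{U-V}$ before applying the triangle inequality, but the mechanisms differ: the paper bounds $\operatorname{Tr}(V\rho V^\dagger)\ge 1-2\norm{U-V}$ using the first part and trace preservation, whereas you bound the Kraus remainder at the operator level via $\sum_j V_j^\dagger V_j = U^\dagger U - V^\dagger V = (U-V)^\dagger U + V^\dagger(U-V)$, which is arguably cleaner. Two points worth noting. First, both arguments need $U^\dagger U=\openone$ (the lemma as stated only assumes $\norm{U}\le 1$, and the second inequality is false without an isometry condition --- take $U=V=0$); the paper supplies this implicitly by calling $U$ ``some desired unitary,'' and you supply it by declaring $\mathcal U$ a channel, so you are on equal footing, and flagging it explicitly is a virtue. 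Second, your appeal to the fact that the diamond norm of a Hermiticity-preserving map is attained on density operators is a standard result and is also used implicitly by the paper (which writes the diamond norm as a maximum over states $\rho$), so it introduces no gap, though citing it or noting that $\mathcal U-\mathcal V$ is a difference of completely positive maps would make the step airtight.
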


The LCU Lemma follows by combining \lem{sup} and \lem{roaa}.

\begin{proof}[Proof of \lem{approxV}]
Using \lem{sup} we can implement the operation $W$ required for \lem{roaa} using $O(1)$ $\sel(\vecU)$ and $\sel(\vecU^\dag)$ operations and $O(M)$ additional 2-qubit gates.
We can choose $s\ge a$ such that $\sin\bigl(\frac{\pi}{2(2\iters+1)}\bigr)= \frac 1s$ for some $\iters \in \mathbb{N}$.
Then \lem{roaa} shows that $\tilde V$ can be approximated to within $O(\delta)$ using $O(\iters)$ applications of $W$ and the projection $P$.
Since $\iters=O(s)=O(a)$, the total number of $\sel(\vecU)$ and $\sel(\vecU^\dag)$ operations is $O(a)$ and the number of additional 2-qubit gates is $O(Ma)$.
\end{proof}

\subsection{Main algorithm}

The main problem with applying the quantum walk as presented in \cite{Chi10,BC12} is that $\arcsin \nu$ is a nonlinear function of $\nu$, so an imprecise phase is introduced.
To solve this, we use a superposition of different numbers of applications of $U$.
Define $V_k$ as in \eq{super}, where the choice of $\{a_m\}_{m=-k}^k$ is considered below.
The eigenvalues of $V_k$ corresponding to the eigenvalues $\mu_\pm$ of $U$ are
\begin{equation}
\mu_{\pm,k} := \sum_{m=-k}^k a_m \mu_\pm^{m}.
\end{equation}
In general $\mu_{\pm,k}$ can depend on $\pm$.
However, we will choose $a_m$ satisfying $a_{-m}=(-1)^m a_m$, which yields $\mu_{\pm,k}$ independent of $\pm$.

To see how to choose the coefficients $a_m$,
solve \eq{munueq} for $\nu$ to give
\begin{equation}
\nu = -\frac{i}{2} \left( \mu_\pm - \frac 1{\mu_\pm} \right).
\end{equation}
This implies that, for any $z$,
\begin{equation}
e^{i\nu \seg} = \exp\left[ \frac \seg 2 \left( \mu_\pm - \frac 1{\mu_\pm} \right) \right].
\end{equation}
This corresponds to the standard generating function for the Bessel function \cite[9.1.41]{AS64}, so
\begin{equation}
\label{eq:lamsum}
e^{i\nu \seg} = \exp\left[ \frac \seg 2 \left( \mu_\pm - \frac 1{\mu_\pm} \right) \right] = \sum_{m=-\infty}^{\infty} J_m(\seg) \mu_\pm^{m}.
\end{equation}
Thus we can take $a_m \approx J_m(\seg)$.
Because there are efficient classical algorithms to calculate Bessel functions, the circuit to prepare $\ket{\chi_k}$ can be designed efficiently.
Note that for large $m$, we have $|J_m(\seg)| \sim \frac{1}{m!}|\seg/2|^m$ \cite[9.3.1]{AS64}, so the values of $a_m$ are similar to the coefficients in the expansion of the exponential function. Thus the segments used here are analogous to the segments used in \cite{BCCKS15}.

To determine the complexity of this approach, we primarily need to bound the error in approximating $e^{i\nu \seg}$.  To optimize the result, we use the coefficients
\begin{equation}
\label{eq:avals}
a_m := \frac{J_m(\seg)}{\sum_{j=-k}^k J_j(\seg)}.
\end{equation}
We make this choice because the most accurate results are obtained when the values of $a_m$ sum to $1$.
Note also that this yields the symmetry $a_{-m}=(-1)^m a_m$, because $J_{-m}(\seg) = (-1)^m J_m(\seg)$ \cite[9.1.5]{AS64}.
The sum of $J_m(\seg)$ over all integers $m$ is equal to $1$ (which can be shown by putting $t=1$ in \cite[9.1.41]{AS64}), but because $k$ is finite, we normalize the values as in \eq{avals}.
With this choice, we have the following error bound, proved in the Appendix.

\begin{lemma}
\label{lem:errorbound}
With the values $a_m$ as in \eq{avals}, for $|\seg|\le k$ we have
\begin{equation}
\label{eq:error}
\| V_k - V_\infty \| = 
O\left( \frac{\norm{H}}{Xd} \frac{(\seg/2)^{k+1}}{k!}  \right).
\end{equation}
\end{lemma}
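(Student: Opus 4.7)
The plan is to reduce $\norm{V_k - V_\infty}$ to a scalar bound on the eigenvalues of $V_k - V_\infty$, exploit an algebraic cancellation to extract the prefactor $\norm{H}/(Xd)$, and close with standard Bessel-function tail estimates. Define $V_\infty := \sum_{m=-\infty}^\infty J_m(\seg) U^m$; because both $V_k$ and $V_\infty$ are functions of $U$ alone, they are simultaneously diagonalizable with $U$. On an eigenvector of $U$ with eigenvalue $e^{i\theta}$, the Jacobi--Anger identity used in \eq{lamsum} maps $V_\infty$ to $e^{i\seg\sin\theta}$ and $V_k$ to $\lambda_k(\theta) := S_k(\theta)/s_k$, with $S_k(\theta) := \sum_{m=-k}^k J_m(\seg)e^{im\theta}$ and $s_k := \sum_{m=-k}^k J_m(\seg)$. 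By \eq{munueq}, every eigenvalue of $U$ has the form $\pm e^{\pm i\arcsin\nu}$ for some $\nu = \lambda/(Xd)$ with $\lambda$ an eigenvalue of $H$, so $|\sin\theta| \le \eta := \norm{H}/(Xd)$; the identity $J_{-m}(\seg) = (-1)^m J_m(\seg)$ further makes both $\lambda_k(\theta)$ and $e^{i\seg\sin\theta}$ invariant under $\theta \mapsto \pi - \theta$, so it suffices to bound the error for $|\theta|\le\arcsin\eta$.

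The key manipulation is to split the tail $\sum_{|m|>k} J_m(\seg)e^{im\theta}$ into even- and odd-$m$ pieces using $J_{-m}=(-1)^m J_m$, and to observe that $1-s_k = 2\sum_{m>k,\text{ even}}J_m(\seg)$. Combining these two facts lets me write
\begin{equation*}
s_k\, e^{i\seg\sin\theta} - S_k(\theta) = 2\!\!\sum_{\substack{m>k \\ m\text{ even}}}\!\! J_m(\seg)\bigl[\cos(m\theta) - e^{i\seg\sin\theta}\bigr] + 2i\!\!\sum_{\substack{m>k \\ m\text{ odd}}}\!\! J_m(\seg)\sin(m\theta).
\end{equation*}
Every surviving term is now multiplied by a quantity that vanishes with $\theta$: the odd-$m$ terms by $|\sin(m\theta)| \le m|\theta| = O(m\eta)$, and the even-$m$ terms by $|\cos(m\theta) - e^{i\seg\sin\theta}| \le |e^{i\seg\sin\theta}-1| + |1-\cos(m\theta)| \le |\seg|\eta + O(m^2\eta^2)$.

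To finish, I would plug these pointwise bounds into standard Bessel tail estimates. From $|J_m(\seg)|\le(|\seg|/2)^m/m!$ and $|\seg|\le k$, the tails $\sum_{m>k}m^j|J_m(\seg)|$ are dominated by their $m=k+1$ terms and evaluate to $O(k^{\max(0,j-1)}(|\seg|/2)^{k+1}/k!)$ for $j=0,1,2$. Substituting these estimates and using $|\seg|\le k$ to absorb the surviving factors of $k$, every contribution collapses to $O(\eta\,(|\seg|/2)^{k+1}/k!)$. The same bound gives $|1-s_k| = O((|\seg|/2)^{k+1}/(k+1)!) = o(1)$, so $s_k = \Theta(1)$ and dividing by $s_k$ does not hurt, yielding \eq{error}. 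The main obstacle is locating the cancellation in the displayed identity: a naive bound $\norm{V_k-V_\infty}\le(\sum_{|m|>k}|J_m(\seg)| + |1-s_k|)/s_k$ only gives the raw Bessel tail $O((\seg/2)^{k+1}/(k+1)!)$ and misses the crucial $\norm{H}/(Xd)$ factor, which can only be extracted by pairing $\pm m$ in the tail and simultaneously folding $(1-s_k)e^{i\seg\sin\theta}$ back into the sum so that each surviving term carries either a small $\sin(m\theta)$ or a small $\cos(m\theta) - e^{i\seg\sin\theta}$.
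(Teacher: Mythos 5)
Your core idea is the same one the paper uses, just with different bookkeeping: reduce to a scalar bound on the eigenvalues (diagonalizing along with $U$ on the relevant invariant subspace), and extract the factor $\norm{H}/(Xd)$ by pairing the $\pm m$ tail terms and folding in the normalization deficit so that every surviving term carries a factor that vanishes with $\nu=\lambda/Xd$. The paper implements the same cancellation by writing $e^{i\nu \seg}-1=\sum_m J_m(\seg)(\mu_\pm^{m}-1)$, bounding the paired tail terms by $O(|J_m(\seg)|\,m|\nu|)$, and using $\sum_m a_m=1$; your even/odd split against $e^{i\seg\sin\theta}$ is the same mechanism, and your treatment of the $\mu_-$ branch via the symmetry $\theta\mapsto\pi-\theta$ is correct and, if anything, more explicit than the paper's.

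There is, however, one quantitative slip in your final assembly. For the even-$m$ terms you bound $|1-\cos(m\theta)|=O(m^2\eta^2)$, and your own tail estimate $\sum_{m>k}m^2|J_m(\seg)|=O\bigl(k\,(|\seg|/2)^{k+1}/k!\bigr)$ then yields a contribution $O\bigl(\eta^2 k\,(|\seg|/2)^{k+1}/k!\bigr)$. This collapses to the claimed $O\bigl(\eta\,(|\seg|/2)^{k+1}/k!\bigr)$ only if $\eta k=O(1)$, which is not assumed: $\eta=\norm{H}/Xd$ can be $\Theta(1)$ while $k$ grows (in the main algorithm $k=\Theta(\log(\tau/\epsilon))$), and the surviving factor of $k$ cannot be absorbed using $|\seg|\le k$, since that inequality points the wrong way. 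As written you prove the lemma weakened by a factor of up to $\eta k\le k$. The fix is one line: use the first-order bound $|1-\cos(m\theta)|\le m|\theta|=O(m\eta)$, valid for all real arguments because $1-\cos x\le\min(x^2/2,\,2)\le|x|$, so each even-$m$ term is bounded by $O\bigl(|J_m(\seg)|(|\seg|+m)\eta\bigr)$ and only the $j\le 1$ tail estimates are needed, giving exactly \eq{error}. With that change your argument is correct and essentially equivalent to the paper's proof.
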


Note that $V_\infty$ is the exact unitary operation desired.
We now determine the query complexity of this approach. In fact, we prove a result that is slightly tighter than the query complexity stated in \thm{upper}.

\begin{lemma}
\label{lem:upperquery}
A d-sparse Hamiltonian $H$ acting on $n$ qubits can be simulated for time $t$ within error $\epsilon$ with complexity (quantified by the number of 2-qubit operations and controlled-$U$ and controlled-$U^\dagger$ operations)
\begin{equation}
  \label{eq:finalresult}
  O\left( \tau \frac{\log(\|H\|t/\epsilon)}{\log\log(\|H\|t/\epsilon)}\right). 
\end{equation}
\end{lemma}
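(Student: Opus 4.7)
The plan is to split $[0,t]$ into short segments, simulate each segment via the operator $V_k$ of \eq{super} using the LCU machinery of \lem{approxV}, and to pick $k$ just large enough that \lem{errorbound} keeps the aggregate error within $\epsilon$.

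First I would set $X=\norm{H}_{\max}$ and fix $\seg$ to be a negative real of small constant absolute value, say $|\seg|=1/2$. With this choice, the generating-function identity \eq{lamsum} shows that one application of $V_\infty$ implements exact evolution of $H$ for time $-\seg/(Xd)$; the normalizer in \eq{avals} stays within a constant of $1$; and $a:=\sum_{m=-k}^k |a_m|=O(1)$ uniformly in $k$ by small-argument bounds on Bessel functions. This lets me partition $[0,t]$ into $r:=tXd/|\seg|=\Theta(\tau)$ equal segments. Since $V_\infty$ is exactly unitary and the $r$ segments compose through the state-preparation isometry $T$, a triangle-inequality argument bounds the total spectral-norm error by $r\cdot \norm{V_k-V_\infty}$. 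Combined with \lem{errorbound} and $\norm{H}/(Xd)\le 1$, the requirement of total error $O(\epsilon)$ reduces to
\[
\frac{(|\seg|/2)^{k+1}}{k!}=O\!\left(\frac{\epsilon}{\norm{H}t}\right),
\]
which, by Stirling's estimate, is satisfied at $k=O(\log(\norm{H}t/\epsilon)/\log\log(\norm{H}t/\epsilon))$.

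Next I would implement each segment by applying \lem{approxV} with $M=2k+1$ unitaries $U^{-k},\ldots,U^k$ and coefficients $a_m$ from \eq{avals}. Each $\sel(\vecU)$ can be realized with $O(k)$ controlled-$U$ and controlled-$U^\dagger$ gates, e.g.\ by walking through the index register in unary and applying one more factor of $U$ (or $U^\dag$) per active bit. Because $a=O(1)$, \lem{approxV} invokes $\sel(\vecU)$ and $\sel(\vecU^\dag)$ only $O(1)$ times per segment, so each segment costs $O(k)$ controlled-$U$ operations and $O(k)$ additional 2-qubit gates. Multiplying by $r=\Theta(\tau)$ segments produces the complexity \eq{finalresult}; converting the spectral-norm guarantee to a diamond-norm simulation error is handled by \lem{diamond}.

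The main obstacle I expect is the layered error accounting. One must confirm that $V_k$ is $\delta$-close to a unitary with $\delta$ small enough for \lem{approxV}, that the $O(\delta)$ slop from robust amplitude amplification composes additively (rather than multiplicatively) across $\Theta(\tau)$ segments, and that the amplification parameter $s$ and iteration count $\iters$ from \lem{roaa} can be chosen jointly with $|\seg|$ so that $\sin(\pi/(2(2\iters+1)))=1/s\ge 1/a$ while keeping $\iters=O(1)$. The selection of $|\seg|$ as a sufficiently small constant is exactly what enables all three of these requirements to hold simultaneously.
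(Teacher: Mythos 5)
Your proposal is correct and follows essentially the same route as the paper's proof: constant-size segments with $\seg=-1/2$ (so $a=O(1)$, $s=2$, and $O(1)$ rounds of oblivious amplitude amplification), the LCU Lemma applied to $U^{-k},\ldots,U^k$ with Bessel coefficients, \lem{errorbound} plus Stirling giving $k=O(\log(\|H\|t/\epsilon)/\log\log(\|H\|t/\epsilon))$ when the per-segment error is set to $\epsilon$ divided by the $\Theta(\tau)$ segments, a unary-encoded $\sel(\vecU)$ costing $O(k)$ controlled-$U^{\pm 1}$ operations per segment, and \lem{diamond} for the spectral-to-diamond-norm conversion. The ``obstacles'' you flag (additive composition of the $O(\delta)$ errors and the joint choice of $s$, $\iters$, $|\seg|$) are resolved exactly as you anticipate and as the paper does.
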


\begin{proof}
Our main goal is to determine the value of $k$ needed to bound the error by $\epsilon$.
This depends on the length of time for the segments, which we can adjust by choosing the value of $\seg$.
We wish to perform each step deterministically with one step of oblivious amplitude amplification, so we should have $s=2$ in \lem{roaa}.
Using the values of $a_m$ given in \eq{avals}, this means that we should take $\seg=O(1)$, and for concreteness $\seg=-1/2$ yields $a<2$, so we can take $s=2$.
Then, using \lem{approxV} with $U_m=U^m$ and $\tilde V=V_k$, we can approximate the operation $V_k$ to within $O(\delta)$.

Given an allowable error in a segment of $\delta>0$, let us take
\begin{equation}
k=O\left( \frac{\log(\norm{H}/Xd\delta)}{\log\log(\norm{H}/Xd\delta)} \right).
\end{equation}
Then, using \lem{errorbound} and the inequality $k!>(k/e)^k$, it is straightforward to show that the error in each segment is no more than $\delta$.
Using \lem{diamond}, this bound on the error in terms of the spectral norm distance implies a bound on the diamond norm distance that is at most a constant factor larger.
For the total error to be no more than $\epsilon$, the value of $\delta$ can be no more than $\epsilon$ divided by the number of segments.
The number of segments is $O(tXd)$, which gives
\begin{equation}
\label{eq:kval}
k=O\left( \frac{\log(\norm{H}t/\epsilon)}{\log\log(\norm{H}t/\epsilon)} \right).
\end{equation}

Using \lem{approxV}, the complexity of each segment is $O(k)$ since a $\sel(\vecU)$ operation can be implemented with complexity $O(M)$, and $M=2k+1$.
It is straightforward to apply $\sel(\vecU)$ using $O(k)$ controlled-$U$ and controlled-$U^\dagger$ operations.
If $\ket m$ is encoded in unary, then each controlled operation may be just controlled on one qubit of $\ket{m}$.

The number of segments required is $O(tXd)$.
It is most efficient to take the minimum value of $X$, which is $\norm{H}_{\max}$.
Because each segment uses $O(k)$ controlled-$U$ and controlled-$U^\dagger$ operations, as well as $O(M)=O(k)$ additional 2-qubit gates,
the complexity for the simulation over time $t$ is $O(\tau k)$, 
Using the value of $k$ from \eq{kval} gives the overall complexity stated in \eq{finalresult}.
\end{proof}

Next we determine the gate complexity of this approach.
Again we give a slightly tighter result than presented in \thm{upper}.

\begin{lemma}
\label{lem:uppergate}
A d-sparse Hamiltonian $H$ acting on $n$ qubits can be simulated for time $t$ within error $\epsilon$ using
\begin{equation}
  \label{eq:finalresultg}
  O\left( \tau [n+F(\log(\|H\|t/\epsilon))] \frac{\log(\|H\|t/\epsilon)}{\log\log(\|H\|t/\epsilon)}\right)
\end{equation}
2-qubit gates, where $F(m)$ is the complexity of performing elementary functions with $m$ bits.
\end{lemma}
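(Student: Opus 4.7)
The plan is to build directly on \lem{upperquery}: that lemma already counts all controlled-$U$, controlled-$U^\dag$, and ancilla-manipulation operations, so what remains is (a)~to convert each controlled-$U$ call into a concrete 2-qubit gate count, and (b)~to verify that the LCU ancilla preparation is a lower-order contribution. Writing $U = iS(2TT^\dag - \openone)$, I would first decompose $U$ into three pieces: the swap $S$ on the two $(n{+}1)$-qubit registers, costing $O(n)$ gates; the reflection about the image of $T$, which equals $\tilde T(2\,\openone\otimes\ket{0}\!\bra{0} - \openone)\tilde T^\dag$ for the unitary dilation $\tilde T$ of $T$, needing one $\tilde T$, one $\tilde T^\dag$, and an $O(n)$-gate multi-controlled reflection; and an overall phase, which is free. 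All of the interesting gate cost therefore lives inside the isometry $T$.

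Next I would bound the cost of $T$ beyond its two oracle queries. From \eq{state}, $T$ must (i)~create a uniform superposition over $\ell \in F_j$, which after composing with $O_F$ costs $O(\log d) = O(n)$ gates; (ii)~compute the rotation parameter from $H^*_{j\ell}$ returned by $O_H$, namely a division by $X$, a square root, and the implicit $\arcsin$; and (iii)~apply the corresponding controlled single-qubit rotation. Tracing the per-segment error budget $\delta = \Theta(\epsilon/\tau)$ from the proof of \lem{upperquery}, $m := O(\log(\|H\|t/\epsilon))$ bits of precision suffice everywhere: an $m$-bit amplitude error propagates to an $O(2^{-m})$ error in $T$, which is below $\delta$ for the chosen $m$. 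The arithmetic in (ii) is by definition $F(m)$ gates, and synthesising the rotation in (iii) to $m$ bits of accuracy adds $O(m)$ gates, absorbed into $F(m)$. Hence one application of $U$ uses $O(n + F(\log(\|H\|t/\epsilon)))$ additional 2-qubit gates on top of its oracle queries.

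For the LCU layer, the coefficients $a_m = J_m(\seg)/\sum_j J_j(\seg)$ with $\seg = -1/2$ can be precomputed classically to $m$ bits and hardwired into a circuit that prepares $\ket{\chi}$ from $\ket{\zer}$; with $\ket m$ encoded in unary on the $M = 2k+1$ selection qubits as in \lem{sup}, this preparation and its inverse use $O(k)$ gates per segment, giving $O(\tau k)$ gates in total --- a lower-order term. Combining, the $O(\tau k)$ controlled-$U$ operations contribute $O(\tau k\,[n + F(\log(\|H\|t/\epsilon))])$ gates, and substituting $k = O(\log(\|H\|t/\epsilon)/\log\log(\|H\|t/\epsilon))$ from \eq{kval} yields \eq{finalresultg}. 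The one delicate point I expect is justifying that the gate cost per $U$ really factorises additively as $n + F(m)$ rather than multiplicatively; this reduces to checking that the in-place arithmetic on $n$-bit register labels and $m$-bit amplitudes can be carried out with $O(n) + F(m)$ gates, which is standard once the precision $m$ is fixed as above.
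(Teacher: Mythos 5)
Your proposal is correct and follows essentially the same route as the paper: count the 2-qubit gates per application of $U$ as $O(n)$ (for $S$, the Hadamards/$O_F$ step, and the reflection) plus $F(\log(\|H\|t/\epsilon))$ for the arithmetic needed to set the rotation in \eq{rotstate} with $\log(\|H\|t/\epsilon)$ bits of precision, note the LCU ancilla overhead from \lem{upperquery} is lower order, and multiply by the $O(\tau k)$ controlled-$U$ count with $k$ as in \eq{kval}. Your extra details (the unitary dilation of $T$ and explicit precision propagation) only elaborate steps the paper leaves implicit.
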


\begin{proof}
To obtain the gate complexity, we need to consider the procedure for performing the step $U$ in detail.
We can perform $T$ by first performing $\log d$ Hadamard gates to prepare the superposition state
\begin{equation}
\frac 1{\sqrt{d}}\sum_{\ell=0}^{d-1} \ket{\ell}\ket{0}.
\end{equation}
Here we take $d$ to be a power of $2$ without loss of generality.
(The value of $d$ can always be rounded up to the nearest power of two.)
Then the oracle $O_F$ (from \eq{oraclef}) can be used to produce the state
\begin{equation}
\frac 1{\sqrt{d}}\sum_{\ell\in F_j} \ket{\ell}\ket{0}.
\end{equation}
A call to the oracle $O_H$ for the value of an element of the Hamiltonian (from \eq{oracleh}) then gives the value of $H_{j\ell}$ in an ancilla space.
Another ancilla qubit is rotated from $\ket{0}$ to
\begin{equation}
\label{eq:rotstate}
\sqrt{\frac{H^*_{j\ell}}{X}}\ket{0}+\sqrt{1-\frac{|H^*_{j\ell}|}{X}}\ket{1}
\end{equation}
based on the value of $H_{j\ell}$.
Then inverting the oracle $O_H$ erases the value of $H_{j\ell}$ from the ancilla space.
Note that there is a sign ambiguity for the square root when $H_{j\ell}$ takes negative real values.
This is addressed in \cite{BC12} and does not affect the complexity.

To perform the step $U$, we also require the swap operation $S$, which has complexity $O(n)$ due to the number of qubits. The gate complexity is $O(n)$ from $S$, plus $O(\log d)=O(n)$ from the Hadamard gates, plus the complexity of performing the rotations to obtain the state \eq{rotstate}.
The complexity of the rotations depends on the number of bits of precision used for the entries of $H$. To obtain overall error $O(\epsilon)$, the number of bits must be $\log(\|H\|t/\epsilon)$.
To determine the rotations needed, we must also compute a square root and trigonometric functions on the output of the oracle.
If these functions can be computed with complexity $F(m)$ for $m$-bit inputs, the contribution to the overall complexity is $F(\log(\|H\|t/\epsilon))$. 

In \lem{upperquery} the complexity is quantified in terms of the number of controlled-$U$ and controlled-$U^\dagger$ operations, so to obtain the overall gate complexity we just need to multiply that complexity by the cost of $U$.
There is also a cost in terms of additional 2-qubit gates in \lem{upperquery}, but that is smaller than the gate cost of performing $U$.
Therefore, the gate complexity is equal to the complexity from \lem{upperquery} times $O(n+F(\log(\|H\|t/\epsilon)))$, which gives a gate complexity as in \eq{finalresultg}.
\end{proof}

This result depends on the complexity of elementary functions, $F(m)$, needed to calculate the rotations.
Using advanced techniques, $F(m)$ may be made close to linear in $m$ \cite{RPB}, though such advanced techniques only give an improvement for extremely high precision.
Using simple techniques based on Taylor series and long multiplication, $F(m)=O(m^{5/2})$.

The classical complexity of determining the coefficients $\{a_m\}_{m=-k}^k$ is also potentially significant.
A set of values of the Bessel function can be efficiently computed using Miller's recurrence algorithm \cite{Mil52,Olv64}.
The complexity scales as $k$ (the number of entries) times $\log(\|H\|t/\epsilon)$ (the bits of precision needed for each $J_m(\seg)$).
This is no larger than the quantum gate complexity.

Note that the gate complexity in \lem{uppergate} depends linearly on $n$, whereas the query complexity in \thm{upper} does not.
This is because performing an operation on a target state with $n$ qubits must require at least $\Omega(n)$ gates.
In contrast, the number of queries need not scale with $n$, because the queries are used to determine which gates to perform.
There is an implicit complexity of $\Omega(n)$ for the queries, because the input to a query is of size at least $n$.

The proof of \thm{upper} then follows immediately.
\begin{proof}[Proof of \thm{upper}]
 The implementation of $U$ uses $O(1)$ oracle calls, which means that the query complexity is the same as the number of controlled applications of $U$.
Noting that $\|H\|\le d\|H\|_{\max}$, \lem{upperquery} implies the query complexity in \thm{upper}, and \lem{uppergate} with $F(m)=O(m^{5/2})$ implies the gate complexity in \thm{upper}.
\end{proof}

\subsection{A tradeoff between $\tau$ and $\epsilon$}

The alternative algorithm characterized by \thm{tradeoff} uses larger segments with $\seg\propto -\tau^\alpha$ for $\alpha \in (0,1]$.
The case $\alpha=0$ corresponds to the case considered above, whereas $\alpha=1$ corresponds to a \emph{single} segment.
The analysis of this section assumes $\alpha>0$.

To analyze this algorithm, we first need to bound the absolute sum of Bessel functions.

\begin{lemma}\label{lem:sqrtsum}
The quantity
\begin{equation}\label{sdef}
{\cal S}(\seg) := \sum_{m=-\infty}^{\infty} |J_m(\seg)|
\end{equation}
is $O(\sqrt{|\seg|})$.
\end{lemma}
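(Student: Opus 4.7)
The plan is to split the sum at a carefully chosen cutoff $K$ and bound the central block by Cauchy--Schwarz (using the Parseval-type identity $\sum_{m=-\infty}^\infty J_m(\seg)^2 = 1$) and the tail by the elementary bound $|J_m(\seg)| \le (|\seg|/2)^{|m|}/|m|!$. The central block will give a $\sqrt{K}$ factor, and the tail will be $O(1)$ provided $K$ is slightly above $|\seg|$.

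First I would recall the Parseval identity $\sum_{m=-\infty}^\infty J_m(\seg)^2 = 1$, which follows from the Jacobi--Anger expansion $e^{i\seg\sin\theta} = \sum_m J_m(\seg) e^{im\theta}$ together with $|e^{i\seg\sin\theta}|=1$. Applying Cauchy--Schwarz to the segment $|m|\le K$ gives
\begin{equation}
\sum_{|m|\le K} |J_m(\seg)| \le \sqrt{2K+1}\,\sqrt{\sum_{m=-\infty}^{\infty} J_m(\seg)^2} = \sqrt{2K+1}.
\end{equation}

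Next I would handle the tail $|m|>K$ using the standard power-series bound $|J_m(\seg)| \le (|\seg|/2)^{|m|}/|m|!$ (valid for all integers $m$ by the symmetry $J_{-m} = (-1)^m J_m$). Choosing $K = \lceil e|\seg|\rceil$, the ratio of successive terms $\frac{(|\seg|/2)^{m+1}/(m+1)!}{(|\seg|/2)^m/m!} = \frac{|\seg|/2}{m+1}$ is at most $1/(2e)$ for $m\ge K$, so the tail is dominated by a geometric series and evaluates to
\begin{equation}
\sum_{|m|>K} |J_m(\seg)| \le 2\sum_{m>K}\frac{(|\seg|/2)^m}{m!} = O\!\left(\frac{(|\seg|/2)^K}{K!}\right) = O(1),
\end{equation}
where the final estimate uses Stirling's approximation $K! \ge (K/e)^K$ together with $K\ge e|\seg|$, giving $(e|\seg|/(2K))^K \le (1/2)^K = O(1)$.

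Combining the two pieces yields ${\cal S}(\seg) = O(\sqrt{K}) + O(1) = O(\sqrt{|\seg|})$ for $|\seg|$ large, and the bound is trivial (${\cal S}(\seg)$ bounded by a constant) for $|\seg|$ bounded, so the asymptotic bound holds throughout. The only mildly delicate point is the choice of cutoff: $K$ must be large enough (a constant times $|\seg|$) for the geometric-series argument in the tail to kick in, while still small enough to keep the $\sqrt{K}$ factor from the Cauchy--Schwarz step at the advertised $\sqrt{|\seg|}$ scale. Taking $K$ proportional to $|\seg|$ balances both requirements.
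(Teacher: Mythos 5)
Your proof is correct and follows essentially the same route as the paper's: Cauchy--Schwarz against the Parseval identity $\sum_m J_m(\seg)^2=1$ for the block $|m|\le K$, the power-series bound $|J_m(\seg)|\le(|\seg|/2)^{|m|}/|m|!$ plus a geometric/Stirling estimate for the tail, and a cutoff $K=\Theta(|\seg|)$ (the paper uses $K=\lceil e\seg/2\rceil$, you use $\lceil e|\seg|\rceil$, an immaterial difference).
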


We prove this in the Appendix.
Using the robust version of amplitude amplification given in \lem{roaa}, we obtain the bound in \thm{tradeoff}.

\begin{proof}[Proof of \thm{tradeoff}]
Using \lem{errorbound}, Stirling's formula, and the fact that $\norm{H} \le d\norm{H}_{\max} \le Xd$, we find that for $|\seg|\le k$ the error is bounded as
\begin{equation}
\|V_k - V_\infty\| = O\bigl( (e/k)^k (\seg/2)^{k+1} \bigr).
\end{equation}
By \lem{roaa}, the error in a segment after amplitude amplification is of the same order.
Therefore, to ensure that the error in a segment is at most $\delta$, it suffices to take
\begin{equation}
k = O(|\seg| + \log(1/\delta)) = O(\tau^\alpha+\log(1/\delta)).
\end{equation}
With this value of $k$, we have $\sum_{m=-k}^k J_m(\seg) = 1 + O(\delta)$, so \lem{sqrtsum} gives
\begin{equation}
\sum_{m=-k}^k |a_m| = O({\cal S}(\seg)) = O(\sqrt{|\seg|}).
\end{equation}
This corresponds to the number of steps of oblivious amplitude amplification.
The overall complexity is therefore $O(k\sqrt{|\seg|})=O(k\tau^{\alpha/2})$ for a single segment.

The number of segments is $\tau/|\seg| \propto \tau^{1-\alpha}$.
This means that the complexity is $O(k\tau^{1-\alpha/2})$.
The value of $k$ also depends on the number of segments.
We can take $\delta=\epsilon/\tau^{1-\alpha}$, which gives $k=O(\tau^\alpha+\log(1/\epsilon))$, implying the result in
\thm{tradeoff}.
\end{proof}

In this proof we have ignored $\log\tau$ in comparison to $\tau^\alpha$, which would not be valid for $\alpha=0$.
For the gate complexity, we again have a multiplying factor of $n+F(\log(\|H\|t/\epsilon))$, yielding a number of gates scaling as
\begin{equation}
O\bigl( [n+F(\log(\|H\|t/\epsilon))] \bigl[\tau^{1+\alpha/2} + \tau^{1-\alpha/2} \log(1/\epsilon)\bigr]\bigr).
\end{equation}

\section{Lower bound}
\label{sec:lower}

We now present a lower bound showing that the dependence of our algorithm on $\tau := d \norm{H}_{\max} t$ is nearly optimal (and that the dependence of \cite{BC12} on $\tau$ is optimal).  The main idea of the proof is the same as in Theorem 3 of \cite{CK10}, but we slightly adapt that argument to let $t$ vary independently of $d$. Note that this is stronger than proving separate lower bounds of $\Omega(t)$ and $\Omega(d)$, since that would only show a lower bound of $\Omega(d+t)$, which is weaker than our $\Omega(td)$ lower bound.

\begin{lemma}
\label{lem:low}
For any positive integer $d$ and any $t>0$, there exists a $2d$-sparse Hamiltonian $H$ with $\norm{H}_{\max}=\Theta(1)$ such that simulating $H$ with constant precision for time $t$ requires $\Omega(td)$ queries.
\end{lemma}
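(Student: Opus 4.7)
The plan is to exhibit $H$ as a $K_{d,d}$-blow-up of a Hamiltonian $H_0$ for which the no-fast-forwarding theorem of \cite{BACS07} already gives an $\Omega(T)$ lower bound, and then to use a symmetric-subspace reduction to translate a time-$t$ simulation of $H$ into a time-$(td)$ simulation of $H_0$. Concretely, I take a $2$-sparse $H_0$ on $\mathbb{C}^N$ with $\|H_0\|_{\max}=\Theta(1)$ -- e.g.\ an adjacency-matrix-like Hamiltonian on a path/cycle, taken to be a hard instance of the parity-from-simulation reduction of \cite{BACS07} -- for which constant-precision simulation for time $T$ requires $\Omega(T)$ queries to the sparse oracles $O_{H_0}$ and $O_{F_0}$. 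I then define $H$ on $\mathbb{C}^{N}\otimes\mathbb{C}^d$ by replacing each vertex $u$ with $d$ copies $(u,i)$ and each edge $\{u,v\}$ with a copy of $K_{d,d}$ in which every pair $\{(u,i),(v,j)\}$ carries the weight $(H_0)_{uv}$. By construction $H$ is $2d$-sparse and $\|H\|_{\max}=\|H_0\|_{\max}=\Theta(1)$.

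The key observation is that the symmetric subspace spanned by states of the form $\ket{\psi}\otimes\ket{+}$, with $\ket{+}:=\frac{1}{\sqrt{d}}\sum_{i=1}^{d}\ket{i}$, is $H$-invariant and that $H$ acts on it as $d\cdot H_0$ on the first tensor factor. Indeed, using $H\ket{u,i}=\sum_{v\sim u}(H_0)_{uv}\sum_{j}\ket{v,j}$ and summing over the $d$ values of $i$, one obtains
\begin{equation}
H\bigl(\ket{\psi}\otimes\ket{+}\bigr) \;=\; \bigl(d\,H_0 \ket{\psi}\bigr)\otimes\ket{+},
\end{equation}
so $e^{-iHt}$ restricted to this subspace implements $e^{-i(td)H_0}$ on the $\ket\psi$ factor.

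Finally, a single query to $O_H$ or $O_F$ for $H$ can be served by $O(1)$ queries to $O_{H_0}$ and $O_{F_0}$: the $\ell$th nonzero in row $(u,i)$ of $H$ is $(v,((\ell-1)\bmod d)+1)$, where $v$ is the $\lceil\ell/d\rceil$th neighbor of $u$ in $H_0$, and the corresponding matrix entry is $(H_0)_{uv}$. Consequently any $Q$-query, constant-error simulator for $e^{-iHt}$ yields, after preparing the ancilla register in $\ket{+}$, running the simulator, and tracing out the ancilla, a constant-error simulator for $e^{-i(td)H_0}$ using $O(Q)$ queries to $O_{H_0}$ and $O_{F_0}$. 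The $\Omega(td)$ no-fast-forwarding bound on the latter therefore forces $Q=\Omega(td)$, as claimed. The main obstacle is simply to (i) pin down a hard instance of \cite{BACS07} that is genuinely $O(1)$-sparse with $\|H_0\|_{\max}=\Theta(1)$, and (ii) check that a constant simulator error on the joint state implies a constant error on the reduced state after tracing out the ancilla; both are routine since the symmetric-subspace restriction is a partial isometry from $\mathbb{C}^N$ into $\mathbb{C}^N\otimes\mathbb{C}^d$ and the ancilla is left untouched in the ideal evolution.
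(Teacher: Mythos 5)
Your proposal is correct and is essentially the paper's argument: the same $K_{d,d}$ blow-up of the BACS07 parity Hamiltonian (keeping $\norm{H}_{\max}=\Theta(1)$) together with the invariant subspace of uniform superpositions over the $d$ copies, on which $H$ acts as $d$ times the base Hamiltonian, so that time $t$ on $H$ effects time $td$ on the base instance and the $\Omega(td)$ bound follows. The only difference is presentational: you invoke the no-fast-forwarding theorem of \cite{BACS07} as a black box via a simulator-to-simulator reduction (with the $O(1)$-query oracle translation and the trace-out-the-ancilla step), whereas the paper re-derives the bound by explicitly building the parity-encoding Hamiltonian and citing the parity query lower bound directly.
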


\begin{proof}
Similarly to the $\Omega(t)$ lower bound from \cite{BACS07}, we construct a sparse Hamiltonian whose dynamics compute the parity of a bit string, and we use the fact that at least $N/2$ quantum queries are needed to compute the parity of $N$ bits~\cite{BBC+01,FGGS98}.

First consider a Hamiltonian $H_1$ whose graph is a path with $N+1$ vertices. (Here the \emph{graph of $H$} has a vertex for each basis state and an edge between two vertices if the corresponding entry of $H$ is nonzero.) The Hamiltonian acts on vectors $\ket{i}$ with $i\in \{0,\ldots,N\}$ and has nonzero matrix elements
\begin{align}
\bra{i-1}H_1\ket{i}=\bra{i}H_1\ket{i-1}=\sqrt{i(N-i+1)}
\end{align}
for $i \in [N]$. Simulating $H_1$ for time $\pi/2$ starting with the state $\ket{0}$ gives the state $\ket{N}$ (i.e., $e^{-iH_1\pi/2}\ket{0}=\ket{N}$).

Next, consider a Hamiltonian $H_2$ generated from a string $x \in \{0,1\}^N$ as in \cite{BACS07}. $H_2$ acts on vertices $\ket{i,j}$ with $i\in \{0,\ldots,N\}$ and $j\in \{0,1\}$ and has nonzero matrix elements
\begin{align}
\bra{i-1,j}H_2\ket{i,j\oplus x_i}
 =\bra{i,j\oplus x_i}H_2\ket{i-1,j}
 =\sqrt{i(N-i+1)}
\end{align}
for all $i \in [N]$ and $j \in \{0,1\}$. By construction, $\ket{0,0}$ is connected to $\ket{i,j}$ if and only if $j=x_1 \oplus \cdots \oplus x_i$. In particular, $\ket{0,0}$ is connected to $\ket{N,x_1 \oplus \cdots \oplus x_N}$, and determining whether it is connected to $\ket{N,0}$ or $\ket{N,1}$ determines the parity of $x$. The graph of $H_2$ consists of two disjoint paths, one containing $\ket{0,0}$ and $\ket{N,x_1 \oplus \cdots \oplus x_N}$. Thus we have $e^{-iH_2\pi/2}\ket{0,0} = \ket{N,x_1 \oplus \cdots \oplus x_N}$, so evolution for time $\pi/2$ computes the parity of $x$.

Finally, we construct the Hamiltonian $H$ claimed in the lemma. As before, $H$ is generated from a string $x \in \{0,1\}^N$. $H$ acts on vertices $\ket{i,j,\ell}$ with $i \in \{0,\ldots,N\}$, $j \in \{0,1\}$, and $\ell \in [d]$.
The nonzero entries of $H$ are
\begin{align}
\bra{i-1,j,\ell}H\ket{i,j \oplus x_i,\ell'}
=\bra{i,j \oplus x_i,\ell'}H\ket{i-1,j,\ell}
=\sqrt{i(N-i+1)}/N
\end{align}
for all $i \in [N]$, $j \in \{0,1\}$, and $\ell,\ell' \in [d]$. The graph of $H$ is similar to that of $H_2$, except that for each vertex in $H_2$, there are now $d$ copies of it in $H$. Each vertex is connected to all $d$ copies of its neighboring vertices, so the graph has maximum degree $2d$.  Observe that, having divided the matrix elements by $N$, we have $\|H\|_{\max}=\Theta(1)$.

Now we simulate the Hamiltonian starting from the state $\ket{0,0,*}$, where $\ket{i,j,*} := \frac{1}{\sqrt{d}} \sum_\ell \ket{i,j,\ell}$ denotes a uniform superposition over the third register. The subspace $\spn\{\ket{i,j,*}: i \in \{0,\ldots,N\},$ $j \in \{0,1\}\}$ is an invariant subspace of $H$.  Since the initial state lies in this subspace, the quantum walk remains in this subspace.  
The nonzero matrix elements of $H$ in this invariant subspace are
\begin{align}
   \bra{i-1,j,*}H\ket{i,j \oplus x_i,*}
  =\bra{i,j \oplus x_i,*}H\ket{i-1,j,*} &
  =d\sqrt{i(N-i+1)}/N,
\end{align}
so we have $e^{-iHt}\ket{0,0,*} = \ket{N,x_1\oplus \cdots \oplus x_N,*}$ for $t=N\pi/2d$. Since this determines the parity of $x$, we find a lower bound of $\Omega(N) = \Omega(td)$ as claimed.
\end{proof}

It is now straightforward to use this result to prove \thm{lower}.

\begin{proof}[Proof of \thm{lower}]
We choose one of two Hamiltonians depending on whether the first or second term in \eq{lower} is larger.
If $\tau$ is larger, then we use \lem{low}.
The value of $d$ used in \lem{low} is denoted $d'$ here, to distinguish it from the $d$ given in \thm{lower}.
Taking $d'=\lfloor d/2 \rfloor$, we ensure that $d'$ is a positive integer, because $d\ge 2$.
Then \lem{low} shows that there is a $2d'$-sparse Hamiltonian; given this value of $d'$, this Hamiltonian is also $d$-sparse, as required for \thm{lower}.

For \thm{lower}, we are also given a required value for $\|H\|_{\max}$.
The Hamiltonian used in \lem{low} has $\|H\|_{\max}=\Theta(1)$.
By multiplying that Hamiltonian by a scaling factor, we obtain a Hamiltonian with the required value of $\|H\|_{\max}$.
Dividing the time used in \lem{low} by the same factor, the simulation requires time $\Omega(\tau)$ for constant precision.
In \thm{lower} we require precision $\epsilon$, which can only increase the complexity.

In the case where the second term is larger, we use Theorem 6.1 of \cite{BCCKS14}.
There it is shown that performing a simulation of a $2$-sparse Hamiltonian with precision $\epsilon$ and $\|H\|_{\max}t=O(1)$ requires
\begin{equation}
\label{eq:bccks}
\Omega \left( \frac{\log(1/\epsilon)}{\log\log(1/\epsilon)}\right)
\end{equation}
queries.
Because $d\ge2$, this Hamiltonian is also $d$-sparse.
As using larger values of $\|H\|_{\max}t$ can only increase the complexity, we also have this lower bound in the more general case.
Therefore, regardless of whether the first or second term in \eq{lower} is larger, this expression provides a lower bound on the complexity.
\end{proof}

It is also possible to combine our lower bound with the lower bound of \cite{BCCKS14} to obtain a combined lower bound in terms of $d$, $t$, and $\epsilon$, that is stronger than \thm{lower}. This yields a lower bound of $\Omega(N)$ for any $N$ that satisfies $\epsilon < \frac{1}{2} |{\sin(td/N)}|^N$. Note that when $\epsilon$ is a constant, we recover \lem{low} and when $td$ is constant, we recover \eq{bccks}. However, for intermediate values this lower bound can be strictly larger than the expression in \thm{lower}.

\section{Conclusion}
\label{sec:disc}
Our technique for Hamiltonian simulation combines ideas from quantum walks and
fractional-query simulation to provide improved performance over both previous
techniques. As a result, it provides near-optimal scaling with respect to all parameters of interest. In particular, the scaling is only slightly superlinear in $\tau = d\norm{H}_{\max}t$, whereas we have proven that linear scaling is optimal. Furthermore, the method has query complexity sublogarithmic in the allowed error, which was proven to be optimal in \cite{BCCKS14}.

Nevertheless, there is still a gap between the complexity of our algorithm and the lower bound in \eq{lower}, as they involve different tradeoffs between the parameters $\tau$ and $\epsilon$.  It remains open whether the performance can be further improved, perhaps to give performance similar to \eq{lower}, although as observed at the end of \sec{lower}, we can rule out scaling strictly as in \eq{lower}. 

Our technique can potentially be used for the more general task of operation conversion, in which we use one quantum operation to implement another.  In our work, we convert a step of a quantum walk to Hamiltonian evolution, whereas in \cite{HHL09} the task is to convert Hamiltonian evolution to an inverse. One approach to operation conversion is to use phase estimation. Here we have shown that a superposition of operations can provide far better performance.

\section*{Acknowledgment}

D.W.B. is funded by an ARC Future Fellowship (FT100100761).  This work was also supported in part by CIFAR, NSERC, the Ontario Ministry of Research and Innovation, and the US ARO under ARO grant Contract Numbers W911NF-12-1-0482 and W911NF-12-1-0486.
This preprint is MIT-CTP \#4631.

\bibliographystyle{myhamsplain}
\bibliography{sim}

\appendix
\section{Proofs of technical lemmas}
\label{app:proofs}

We now present proofs of some of the more technical results.

\begin{proof}[Proof of \lem{diamond}]
Consider two operators $U$ and $V$ acting on a pure state $\ket{\psi}$.
For the following analysis we define
\begin{align}
\phi &:= \arg{\bra{\psi}V^\dagger U\ket{\psi}}, \\
N_U &:= 1/\sqrt{\bra{\psi}U^\dagger U\ket{\psi}}, \\
N_V&:= 1/\sqrt{\bra{\psi}V^\dagger V\ket{\psi}},\\
N_{\pm} &:= \sqrt{2 \pm 2N_U N_V|\bra{\psi}V^\dagger U\ket{\psi}|}.
\end{align}
Then we define a basis
\begin{equation}
\ket{\chi_\pm} := \frac{N_U U\ket{\psi} \pm e^{i\phi}N_V V\ket{\psi}}{N_\pm} .
\end{equation}
In terms of this basis, we have
\begin{align}
U\ket{\psi} &= \frac 1{2N_U} (N_+\ket{\chi_+}  + N_-\ket{\chi_-}), \\
V\ket{\psi} &= \frac{e^{-i\phi}}{2N_V}  (N_+\ket{\chi_+}  - N_-\ket{\chi_-}),
\end{align}
so
\begin{align}
U\ket{\psi}\bra{\psi}U^\dagger - V\ket{\psi}\bra{\psi}V^\dagger &=
\frac 1{4N_U^2} \begin{bmatrix}
N_+^2 & N_+N_-  \\
N_+N_- & N_-^2 \end{bmatrix} - \frac 1{4N_V^2}
\begin{bmatrix}
N_+^2 & -N_+N_-  \\
-N_+N_- & N_-^2 \end{bmatrix} .
\end{align}

The eigenvalues of this matrix are
\begin{align}
\lambda_\pm = \frac 12 \left[\bra{\psi}U^\dagger U\ket{\psi} - \bra{\psi}V^\dagger V\ket{\psi} \pm \sqrt{(\bra{\psi}U^\dagger U\ket{\psi} + \bra{\psi}V^\dagger V\ket{\psi})^2-4|\bra{\psi}V^\dagger U\ket{\psi}|^2} \right] .
\end{align}
The square root in this expression must be at least as large as $|\bra{\psi}U^\dagger U\ket{\psi} - \bra{\psi}V^\dagger V\ket{\psi}|$, so $\lambda_-\le 0$ and $\lambda_+\ge 0$.
Thus the trace norm (i.e., the Schatten $1$-norm, denoted $\norm{\cdot}_1$) is
\begin{align}
\|U\ket{\psi}\bra{\psi}U^\dagger - V\ket{\psi}\bra{\psi}V^\dagger\|_1 &= |\lambda_+| + |\lambda_-| \nn
&=\sqrt{(\bra{\psi}U^\dagger U\ket{\psi} + \bra{\psi}V^\dagger V\ket{\psi})^2-4|\bra{\psi}V^\dagger U\ket{\psi}|^2}\, .
\end{align}
Next, from the definition of the spectral norm,
\begin{align}
\|U-V\|^2 &\ge \bra{\psi} (U-V)^\dagger (U-V) \ket{\psi} \nn
&= \bra{\psi}U^\dagger U\ket{\psi}+\bra{\psi}V^\dagger V\ket{\psi} - \bra{\psi}V^\dagger U\ket{\psi} - \bra{\psi}U^\dagger V\ket{\psi} \nn
&\ge \bra{\psi}U^\dagger U\ket{\psi}+\bra{\psi}V^\dagger V\ket{\psi} - 2|\bra{\psi}V^\dagger U\ket{\psi}|.
\end{align}
Using this inequality with the expression for the trace norm gives
\begin{align}
&\|U\ket{\psi}\bra{\psi}U^\dagger - V\ket{\psi}\bra{\psi}V^\dagger\|_1^2 \nn
&= (\bra{\psi}U^\dagger U\ket{\psi}+\bra{\psi}V^\dagger V\ket{\psi} -2|\bra{\psi}V^\dagger U\ket{\psi}|)(\bra{\psi}U^\dagger U\ket{\psi}+\bra{\psi}V^\dagger V\ket{\psi} +2|\bra{\psi}V^\dagger U\ket{\psi}|) \nn
&\le \|U-V\|^2 (\bra{\psi}U^\dagger U\ket{\psi}+\bra{\psi}V^\dagger V\ket{\psi} +2|\bra{\psi}V^\dagger U\ket{\psi}|).
\end{align}
Provided $\|U\|\le 1$ and $\|V\|\le 1$, this yields
\begin{align}
\|U\ket{\psi}\bra{\psi}U^\dagger - V\ket{\psi}\bra{\psi}V^\dagger\|_1^2 \le 4\|U-V\|^2,
\end{align}
so
\begin{align}
\|U\ket{\psi}\bra{\psi}U^\dagger - V\ket{\psi}\bra{\psi}V^\dagger\|_1 \le 2\|U-V\|.
\end{align}

Given a mixed state $\rho=\sum_j p_j\ket{\psi_j}\bra{\psi_j}$, strong convexity implies that
\begin{align}
\|U\rho U^\dagger - V\rho V^\dagger\|_1 &\le \sum_j p_j \| U\ket{\psi_j}\bra{\psi_j}U^\dagger - V\ket{\psi_j}\bra{\psi_j}V^\dagger \|_1 
\nn &\le 2\sum_j p_j \| U - V \| \nn &= 2 \| U - V \|  .
\end{align}
Similarly, tensoring with the identity gives
\begin{align}
\|(U\otimes\openone)\rho (U^\dagger\otimes\openone) - (V\otimes\openone)\rho (V^\dagger\otimes\openone)\|_1 &\le 2 \| U\otimes\openone - V\otimes\openone \| \nn
&= 2 \| U - V \|.
\end{align}
Hence, maximizing over $\rho$, the diamond norm satisfies
\begin{equation}
\| U-V \|_\diamond \le 2 \| U - V \|.
\end{equation}

Now consider the case where $U$ is some desired unitary, and $V=V_0$ is an operation that happens if a measurement succeeds, with other operations $V_j$ occurring for other measurement outcomes.
That is, the overall channel is
\begin{equation}
C(\rho) = V\rho V^\dagger + \sum_{j} V_j \rho V_j^\dagger
\end{equation}
where $V^\dag V + \sum_j V_j^\dag V_j = I$.
The trace is bounded by
\begin{align}
{\rm Tr}(V\rho V^\dagger) &={\rm Tr}(U\rho U^\dagger) - {\rm Tr}(U\rho U^\dagger-V\rho V^\dagger) \nn
&\ge 1 - \| U\rho U^\dagger - V\rho V^\dagger \|_1 \nn
&\ge 1- 2 \| U - V \|.
\end{align}
Hence the trace of the remaining part is bounded as
\begin{align}
{\rm Tr}\left(\sum_{j} V_j \rho V_j^\dagger\right) &= {\rm Tr}(C(\rho)) - {\rm Tr}(V\rho V^\dagger) \nn
&\le 1 - (1-2 \| U - V \|) \nn
&\le 2 \| U - V \|.
\end{align}
For non-negative Hermitian operators, the trace norm is equal to the trace, so
\begin{equation}
\left\|\sum_{j} V_j \rho V_j^\dagger\right\|_1 \le 2 \| U - V \|.
\end{equation}
Since the trace is unchanged by tensoring with the identity, we have
\begin{equation}
\left\|\sum_{j} (V_j\otimes\openone) \rho (V_j\otimes\openone)^\dagger\right\|_1 \le 2 \| U - V \|.
\end{equation}
Hence
\begin{align}
\| C-U\|_\diamond &= \max_\rho \left\| (V\otimes\openone)\rho (V\otimes\openone)^\dagger + \sum_{j} (V_j\otimes\openone) \rho (V_j\otimes\openone)^\dagger - (U\otimes\openone)\rho (U\otimes\openone)^\dagger \right\|_1 \nn
&\le \max_\rho \left[ \| (V\otimes\openone)\rho (V\otimes\openone)^\dagger - (U\otimes\openone)\rho (U\otimes\openone)^\dagger \|_1 + \left\| \sum_{j} (V_j\otimes\openone) \rho (V_j\otimes\openone)^\dagger \right\|_1 \right]\nn
&\le 4 \|U-V\|
\end{align}
as claimed.
\end{proof}

\begin{proof}[Proof of \lem{errorbound}]
For $|m|\le k$, the values of $a_m$ differ from $J_m(\seg)$ only due to the normalization factor in \eq{avals}.
The Bessel function $J_m(\seg)$ is bounded, for real $z$ and integer $m$, by \cite[9.1.62]{AS64}
\begin{equation}
\label{eq:jsim}
|{J_m(\seg)}| \le \frac{1}{|m|!} \left| \frac{\seg}{2} \right|^{|m|}
\end{equation}
(here we use the fact that $J_{-m}(\seg) = (-1)^m J_m(\seg)$ \cite[9.1.5]{AS64}).
Using this bound, together with the condition that $|z|\le k$,
\begin{align}
\label{eq:Jbnd}
2\sum_{m=k+1}^\infty |J_m(\seg)|
\le 2\sum_{m=k+1}^\infty \frac{|\seg/2|^m}{m!}
< 2\frac{|\seg/2|^{k+1}}{(k+1)!}\sum_{m=k+1}^{\infty} (1/2)^{m-(k+1)} 
= 4\frac{|\seg/2|^{k+1}}{(k+1)!}.
\end{align}
As a result, the normalization factor in \eq{avals} satisfies
\begin{equation}
\sum_{m=-k}^{k} J_m(\seg) \ge 1-4\frac{|\seg/2|^{k+1}}{(k+1)!}.
\end{equation}
This means that $a_m$ closely approximates $J_m(z)$, in the sense that
\begin{equation}
\label{eq:aJ}
a_m = J_m(z) \left[ 1+ O\left(\frac{(\seg/2)^{k+1}}{(k+1)!}\right)\right].
\end{equation}

Similarly, using $|\mu_\pm^{m}-1|\le |\nu m|$ and $|\seg|\le k$ gives
\begin{align}
\label{eq:Jbnd2}
\left|2\sum_{m=k+1}^\infty J_m(\seg) (\mu_\pm^m-1) \right|
&\le 2|\nu|\sum_{m=k+1}^\infty m\frac{|\seg/2|^m}{m!} \nn
&< 2|\nu|\frac{|\seg/2|^{k+1}}{(k+1)!}\sum_{m=k+1}^{\infty} m(1/2)^{m-(k+1)} \nn
&= 4|\nu|(k+2)\frac{|\seg/2|^{k+1}}{(k+1)!}.
\end{align}
Using \eq{lamsum} gives
\begin{equation}
\label{eq:expan}
e^{i\nu \seg}-1 = \sum_{m=-\infty}^{\infty} J_m(\seg) (\mu_\pm^{m}-1) .
\end{equation}
Therefore, with \eq{Jbnd2}, we obtain
\begin{align}
 \sum_{m=-k}^{k} J_m(\seg) (\mu_\pm^{m}-1) &= e^{i\nu \seg}-1 + O\left(\nu\frac{(\seg/2)^{k+1}}{k!}\right).
 \end{align}
Using this expression together with \eq{aJ} then gives
\begin{align}
\sum_{m=-k}^{k} a_m (\mu_\pm^m-1) &= \left[e^{i\nu \seg}-1 + O\left(\nu\frac{(\seg/2)^{k+1}}{k!}\right) \right]\left[ 1+ O\left(\frac{(\seg/2)^{k+1}}{(k+1)!}\right) \right].
\end{align}
Using $|e^{i\nu \seg}-1|\le |\nu z|$ and $|z|\le k$ gives
\begin{align}
\sum_{m=-k}^{k} a_m (\mu_\pm^m-1) &= e^{i\nu \seg}-1 + O\left(\nu\frac{(\seg/2)^{k+1}}{k!}\right).
\end{align}
Because $\sum_m a_m = 1$, we obtain
\begin{equation}
\left|\sum_{m=-k}^{k} a_m \mu_\pm^{m} - e^{i\nu \seg} \right| = 
O\left( \nu \frac{(\seg/2)^{k+1}}{k!}  \right).
\end{equation}
Now $\sum_{m=-k}^{k} a_m \mu_\pm^{m}$ are the eigenvalues of $V_k$, and $ e^{i\nu \seg}$ are the eigenvalues of the desired unitary operation $V_\infty$.
Using $\nu=\lambda/Xd$, we have $|\nu|\le\norm{H}/Xd$.
Hence the norm of the difference of these operators is bounded as in \eq{error}.
\end{proof}

\begin{proof}[Proof of \lem{sqrtsum}]
First we bound the sum over small values of $m$.  The Bessel functions satisfy \cite[9.1.76]{AS64}
\begin{equation}
\sum_{m=-\infty}^{\infty} [J_m(\seg)]^2 = 1.
\end{equation}
For any positive integer $k$ we have
\begin{equation}
\sum_{m=-k}^{k} |J_m(\seg)|^2 < 1.
\end{equation}
Using the Cauchy-Schwarz inequality, we find
\begin{equation}
\label{eq:boundsmall}
\sum_{m=-k}^k |J_m(\seg)| \le \sqrt{\sum_{m=-k}^{k} 1}\sqrt{\sum_{m=-k}^{k} |J_m(\seg)|^2} < \sqrt{2k+1}.
\end{equation}

Provided $|z|\le k$, using \eq{Jbnd}, together with $\ell ! > (\ell/e)^\ell$, we obtain
\begin{align}
\label{eq:boundbig}
2\sum_{m=k+1}^\infty |J_m(\seg)| < 4\left|\frac{e \seg}{2(k+1)}\right|^{k+1}.
\end{align}
Combining \eq{boundsmall} and \eq{boundbig}, we find
\begin{align}
  {\cal S}(\seg) < \sqrt{2k+1} + 4\left|\frac{e \seg}{2(k+1)}\right|^{k+1}.
\end{align}
Finally, taking $k=\lceil e\seg/2 \rceil$, the second term is $O(1)$, which gives ${\cal S}(\seg) = O(\sqrt{|\seg|})$ as claimed.
\end{proof}

\end{document}